\newcommand\ZZ{\mathbb{Z}}
\title{Fixed Parameter Undecidability for Wang Tilesets}
\author{Emmanuel Jeandel\thanks{The first author is partially supported by ANR-09-BLAN-0164.
	Part of this work was done during a traineeship of the second author.}
\institute{
LIRMM, CNRS UMR 5506 - CC 477\\                                                
161 rue Ada,  34\,095 Montpellier Cedex 5, France}
\email{jeandel@lirmm.fr}\\
\and Nicolas Rolin
\institute{
LIP6\\
4 place Jussieu, 75252 Paris cedex 05, France}
\email{nicolas.rolin@ens-cachan.fr}
}
\newtheorem{theorem}{Theorem}[section]
\newtheorem{lemma}{Lemma}[section]
\newtheorem{proposition}{Proposition}[section]
\newtheorem{definition}{Definition}[section]
\newtheorem{corollary}{Corollary}[section]
\begin{document}
\maketitle
\begin{abstract}
Deciding if a given set of Wang tiles admits a tiling of the plane
is decidable if the number of Wang tiles (or the
number of colors) is bounded, for a trivial reason, as there are
only finitely many such tilesets. We prove however that the tiling
problem remains undecidable if the difference between the number of
tiles and the number of colors is bounded by 43.

One of the main new tool is the concept of Wang bars, which are
equivalently inflated Wang tiles or thin polyominoes.
\end{abstract}

\section*{Introduction}
Wang tiles are a model of computation introduced by Wang
\cite{wangpatternrecoII} to study decision procedures for some
fragments of first-order logic. The model is quite simple: We are
given a finite set of tiles, i.e. a squares with colored edges, and we
look at a way to tile the plane with the tiles so that contiguous
edges have the same color. Berger, a student of Wang, proved
\cite{BergerPhd} that the problem is algorithmically undecidable:
there is no way to decide, given a set of tiles, whether it can tile
the plane. One of the reasons for the difficulty of the problem is the
existence of \emph{aperiodic} sets of tiles, for which it is possible
to find a tiling, but no \emph{periodic} one.

If we happen to bound the number of different tiles, say by $100$,
then the problem becomes trivially decidable, as there are only a
finite number of such sets of tiles. The same is true is we bound the
number of colors. We will try here to find a good parameter, for which
this trivial situation does not happen: We prove in this article that the problem remains undecidable if the
difference between the number of tiles and the number of colors is
bounded by a constant, here $43$. Furthermore, we prove that if the
difference is small enough, then the tiling problem is actually
decidable, because there are no aperiodic tilesets with such a
difference. In another way, if there are too many colors (compared to
the number of tiles), the set of tiles cannot be aperiodic.

This last result has pragmatic implications that motivated this
research. The first author is at present conducting experiments
to find an aperiodic set of tiles with as few tiles as
possible (actually set of tiles that are \emph{candidates} for being
aperiodic, as the problem is not decidable{\ldots}). The main
bottleneck for this kind of approach is indeed the number of such sets of
tiles, thus proving that we need a small number of colors to be aperiodic
may critically reduce the size of the search space.

The article is organised as follows. We introduce the relevant
definitions in the first section. We then prove that the problem under
consideration is equivalent to the question of tileability with a
fixed number of \emph{Wang bars}, which are an intermediate between Wang
tiles and polyominoes. We then proceed to prove the upper and lower
bound on the parameter that is needed to have an undecidability result.

\newcommand\wang[4]{
        \draw[line width=0.1pt, fill=black!30] (0,0) -- +(2,2) -- +(0,4) -- +(0,0);
        \draw[line width=0.1pt] (0,0) -- +(2,2) -- +(4,0) -- +(0,0);
\begin{scope}[xshift=4cm, yshift=4cm]
        \draw[line width=0.1pt,fill=black!30] (0,0) -- +(-2,-2) -- +(0,-4) -- +(0,0); 
        \draw[line width=0.1pt] (0,0) -- +(-2,-2) -- +(-4,0) -- +(0,0);
\end{scope}%
                \draw (1,2) node {#1};
                \draw (3,2) node {#2};
                \draw (2,3.5) node {#3};
                \draw (2,.5) node {#4};
}

\section{Definitions}
\subsection{Wang tiles}
A Wang tile $t$ is a square tile with colored edges, as represented in
Fig.\ref{fig:wangtile}. Formally, it is given by a quadruplet $(t_e, t_w, t_n,
t_s)$ of symbols, called colors.
A \emph{tileset} $\tau$ is a finite set of Wang tiles.
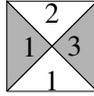
\begin{figure}
\begin{center}
	\begin{tikzpicture}[scale=0.3]
\begin{scope}[xshift=0 cm,yshift=0 cm]
\wang 1 3 2 1
\end{scope}
\end{tikzpicture}
\end{center}
\caption{A Wang tile}
\label{fig:wangtile}
\end{figure}
A \emph{tiling} of the plane by $\tau$ is a map $f$ from the discrete
plane $\ZZ^2$ to $\tau$ so that two tiles that share a common edge
agree on the color: For all integers $i,j$, we have $f(i,j)_e = f(i+1,j)_w$ and $f(i,j)_n = f(i,j+1)_s$.

The main question we study in this article is the Domino Problem: To
decide, given a tileset $\tau$, if there is a tiling by $\tau$.
This problem was proven undecidable by Berger \cite{BergerPhd}.

A reason for the complexity of the problem is the existence of
\emph{aperiodic} tilesets. A tiling $t$ is \emph{periodic}
if there exists $p$ so that $t(i,j) = t(i,j+p) = t(i+j,p)$ for all $i,j$.
If our tileset $\tau$ admits a periodic tiling, this tiling is easy to
find, by just testing all possible finite maps from $[0,n]^2$ to
$\tau$. 

A tileset is said to be periodic if it admits a periodic tiling. It
can be proven equivalent to the existence of a tiling $t$ which is only
horizontally periodic, that is $t(i,j) = t(i+p,j)$ for all $i,j$ and
for some $p$.
However there exist \emph{aperiodic} tilesets, that is tilesets
that tile the plane, but that admit no periodic tiling.
Such a tileset, due to Culik \cite{Culik} based on work by Kari
\cite{Kari14} is depicted in Fig.~\ref{fig:kariculik}
\begin{figure}[htbp]
	\begin{tikzpicture}[scale=0.3]

\begin{scope}[xshift=0 cm,yshift=0 cm]
\wang 1 2 2 3
\end{scope}
\begin{scope}[xshift=0 cm,yshift=6 cm]
\wang 1 3 2 2
\end{scope}
\begin{scope}[xshift=6 cm,yshift=0 cm]
\wang 2 3 2 3
\end{scope}
\begin{scope}[xshift=6 cm,yshift=6 cm]
\wang 2 1 1 2
\end{scope}
\begin{scope}[xshift=12 cm,yshift=0 cm]
\wang 3 1 1 3
\end{scope}
\begin{scope}[xshift=12 cm,yshift=6 cm]
\wang 3 2 1 2
\end{scope}
\begin{scope}[xshift=18 cm,yshift=0 cm]
\wang 4 4 4 1
\end{scope}
\begin{scope}[xshift=18 cm,yshift=6 cm]
\wang 4 4 3 2
\end{scope}
\begin{scope}[xshift=24 cm,yshift=0 cm]
\wang 4 5 2 1
\end{scope}
\begin{scope}[xshift=24 cm,yshift=6 cm]
\wang 4 5 2 4
\end{scope}
\begin{scope}[xshift=30 cm,yshift=0 cm]
\wang 5 5 4 1
\end{scope}
\begin{scope}[xshift=30 cm,yshift=6 cm]
\wang 5 5 3 2
\end{scope}
\begin{scope}[xshift=36 cm,yshift=0 cm]
\wang 5 4 2 2
\end{scope}

\end{tikzpicture}
\caption{An aperiodic tileset with 13 tiles from Kari and Culik}
\label{fig:kariculik}
\end{figure}
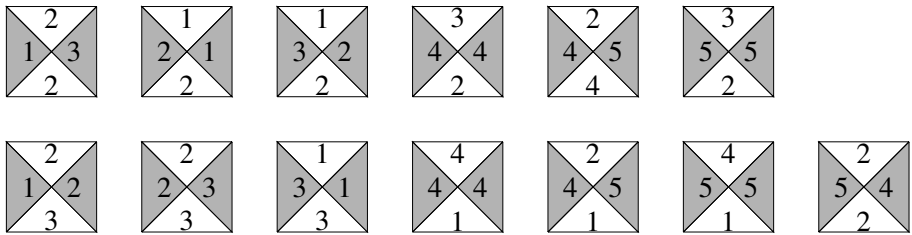
However, every such tileset admits tilings with some specific
regularity properties. We will use the following lemma in some of
the proofs:
\begin{lemma}\label{lemma:twice}
  Let $\tau$ be a tileset that tiles the plane. Then there exists a
  tiling $f$ by $\tau$ with the following property: For every tile
  $t$ that appears in $f$, there exist a row where $t$ appears at least twice
\end{lemma}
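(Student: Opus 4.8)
The plan is to work inside the subshift of all tilings by $\tau$, pass to a minimal subsystem, and then combine uniform recurrence with a pigeonhole argument on a strip of bounded height. Concretely, let $X\subseteq\tau^{\ZZ^2}$ be the set of all tilings of the plane by $\tau$. Since $\tau$ tiles the plane, $X\neq\emptyset$; moreover $X$ is closed and invariant under the $\ZZ^2$-shift, hence a compact metrizable space carrying a $\ZZ^2$-action. By the usual Zorn's lemma argument, $X$ contains a nonempty closed shift-invariant subset $Y$ that is minimal for inclusion, and I fix any $f\in Y$; this $f$ is the tiling I claim works.

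The next step is to exploit minimality. A point of a minimal subshift is uniformly recurrent, i.e. every finite pattern occurring in $f$ reoccurs within every sufficiently large window. Applying this to the single-cell patterns, for each tile $t$ appearing in $f$ there is an integer $R=R_t\ge 1$ such that every $R\times R$ window of $f$ contains an occurrence of $t$. (This $R$ may depend on $t$, which is harmless since the conclusion of the lemma is asserted tile by tile.)

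It then remains to run a pigeonhole argument. Fix a tile $t$ appearing in $f$ and the associated $R$, and look at the height-$R$ horizontal strip $\ZZ\times\{0,\dots,R-1\}$. For each $k\in\ZZ$ the window $[\,kR,\,kR+R-1\,]\times\{0,\dots,R-1\}$ contains an occurrence of $t$, lying in some row $j_k\in\{0,\dots,R-1\}$ and in some column of $[\,kR,\,kR+R-1\,]$. These occurrences sit in pairwise distinct columns, so there are infinitely many of them, while $j_k$ ranges over only $R$ values; by the pigeonhole principle some fixed row $j^{*}$ carries infinitely many of them, so in particular $t$ occurs at least twice in row $j^{*}$. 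As $t$ was an arbitrary tile appearing in $f$, the tiling $f$ has the stated property.

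The only real content is in the first two steps: one must not start from an arbitrary tiling (a generic tiling can perfectly well use a tile exactly once in every row it meets) but from a point of a minimal subsystem, after which the standard facts that nonempty compact $\ZZ^2$-systems have minimal subsystems and that points of minimal subshifts are uniformly recurrent do the work. Granting those, the combinatorial core is just the short pigeonhole on a bounded-height strip, which is routine; a more self-contained variant would replace the appeal to minimality by a direct compactness construction of a uniformly recurrent tiling.
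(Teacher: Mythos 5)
Your proof is correct, but it takes a genuinely different route from the paper's. The paper argues combinatorially: it picks a tiling $f$ using the \emph{minimal number of distinct tiles}, observes that a tile $t$ occurring at most once per row must be missing from arbitrarily large squares of $f$, and extracts by compactness a tiling avoiding $t$ altogether, contradicting minimality. You instead pass to a minimal subsystem of the subshift of all tilings, invoke the standard fact that points of minimal systems are uniformly recurrent, and finish with a clean pigeonhole on a strip of height $R$: infinitely many occurrences of $t$ in the strip, only $R$ rows, so some row sees $t$ at least twice. Your argument actually establishes the stronger conclusion that $f$ can be taken quasiperiodic --- precisely the ``much more'' the paper mentions it could prove (citing Durand) but deliberately avoids --- at the cost of importing the Zorn's-lemma construction of minimal subsystems and the uniform-recurrence lemma from topological dynamics. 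The paper's version is more elementary and self-contained (its minimization is over a finite set, the subsets of $\tau$, rather than over closed invariant subsets of a compact space), while yours is arguably more conceptual and reusable. Both compactness uses are sound, and your pigeonhole step is airtight since the occurrences produced by disjoint windows lie in pairwise distinct columns.
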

We can prove much more (for example the existence of a quasiperiodic
configuration \cite{Du}), but this will be sufficient for this article.
\begin{proof}
	Take a tiling $f$ with the minimal number of different tiles.
	If some tile $t$ appears at most once on each row of $f$, then we
	can find in $f$ big squares where $t$ does not appear. By a
	compactness argument, we can extract from $f$ a tiling $f'$
	without $t$, hence with strictly less different tiles, a
	contradiction.
\end{proof}	

\subsection{The graph approach}
If we look at tilings of only one row by $\tau$, a good way to
understand the tilings is using a (labeled) graph: Represent each
color by a vertex, and for each Wang tile $t$ with west color $c$ and
east color $c'$, add an edge from $c$ to $c'$ labeled with its north
and south color. An example corresponding to Fig.\ref{fig:kariculik} is provided in
Fig.\ref{fig:karigraph}. Note that for this particular example the graph has two
connected components, that are strongly connected (the relevant
vocabulary on graph theory will be defined below).

\begin{figure}	

\begin{tikzpicture}[>=latex,line join=bevel,]
  \pgfsetlinewidth{1bp}
\pgfsetcolor{black}
  % Edge: B -> C
  \draw [->] (13.721bp,92.025bp) .. controls (6.0943bp,80.801bp) and (-0.59958bp,65.753bp)  .. (7bp,54bp) .. controls (16.725bp,38.96bp) and (34.395bp,30.141bp)  .. (60.522bp,22.254bp);
  \definecolor{strokecol}{rgb}{0.0,0.0,0.0};
  \pgfsetstrokecolor{strokecol}
  \draw (24bp,63bp) node {(2,3)};
  % Edge: D -> D
  \draw [->] (298.35bp,202.11bp) .. controls (308.53bp,202.29bp) and (317bp,200.92bp)  .. (317bp,198bp) .. controls (317bp,196.13bp) and (313.53bp,194.89bp)  .. (298.35bp,193.89bp);
  \draw (334bp,198bp) node {(4,1)};
  % Edge: D -> D
  \draw [->] (292.03bp,210.25bp) .. controls (317.35bp,219.22bp) and (351bp,215.14bp)  .. (351bp,198bp) .. controls (351bp,183bp) and (325.23bp,178bp)  .. (292.03bp,185.75bp);
  \draw (368bp,198bp) node {(3,2)};
  % Edge: B -> A
  \draw [->] (38.138bp,124.71bp) .. controls (47.17bp,138.25bp) and (60.038bp,157.56bp)  .. (75.995bp,181.49bp);
  \draw (79bp,153bp) node {(1,2)};
  % Edge: E -> E
  \draw [->] (298.35bp,112.11bp) .. controls (308.53bp,112.29bp) and (317bp,110.92bp)  .. (317bp,108bp) .. controls (317bp,106.13bp) and (313.53bp,104.89bp)  .. (298.35bp,103.89bp);
  \draw (334bp,108bp) node {(4,1)};
  % Edge: E -> E
  \draw [->] (292.03bp,120.25bp) .. controls (317.35bp,129.22bp) and (351bp,125.14bp)  .. (351bp,108bp) .. controls (351bp,93.004bp) and (325.23bp,88.002bp)  .. (292.03bp,95.749bp);
  \draw (368bp,108bp) node {(3,2)};
  % Edge: D -> E
  \draw [->] (248.17bp,189.19bp) .. controls (236.22bp,183.42bp) and (222.87bp,174.58bp)  .. (216bp,162bp) .. controls (207.35bp,146.15bp) and (223.01bp,132.13bp)  .. (248.79bp,117.63bp);
  \draw (233bp,153bp) node {(2,1)};
  % Edge: D -> E
  \draw [->] (272bp,179.79bp) .. controls (272bp,167.34bp) and (272bp,150.61bp)  .. (272bp,126.19bp);
  \draw (289bp,153bp) node {(2,4)};
  % Edge: C -> A
  \draw [->] (102.78bp,32.838bp) .. controls (115.86bp,46.311bp) and (133.49bp,67.469bp)  .. (141bp,90bp) .. controls (146.06bp,105.18bp) and (146.06bp,110.82bp)  .. (141bp,126bp) .. controls (134.78bp,144.66bp) and (121.62bp,162.38bp)  .. (102.78bp,183.16bp);
  \draw (161bp,108bp) node {(1,3)};
  % Edge: C -> B
  \draw [->] (75.862bp,34.708bp) .. controls (66.83bp,48.254bp) and (53.962bp,67.557bp)  .. (38.005bp,91.493bp);
  \draw (79bp,63bp) node {(1,2)};
  % Edge: E -> D
  \draw [->] (290.02bp,121.44bp) .. controls (296.68bp,127.51bp) and (303.46bp,135.28bp)  .. (307bp,144bp) .. controls (310.01bp,151.41bp) and (310.01bp,154.59bp)  .. (307bp,162bp) .. controls (304.73bp,167.59bp) and (301.13bp,172.79bp)  .. (290.02bp,184.56bp);
  \draw (326bp,153bp) node {(2,2)};
  % Edge: A -> B
  \draw [->] (60.522bp,193.75bp) .. controls (42.073bp,189.25bp) and (18.743bp,180.16bp)  .. (7bp,162bp) .. controls (1.2409bp,153.09bp) and (3.6903bp,142.29bp)  .. (13.721bp,123.97bp);
  \draw (24bp,153bp) node {(2,3)};
  % Edge: A -> C
  \draw [->] (93.55bp,180.3bp) .. controls (95.368bp,174.54bp) and (97.092bp,168.08bp)  .. (98bp,162bp) .. controls (105.09bp,114.53bp) and (105.09bp,101.47bp)  .. (98bp,54bp) .. controls (97.589bp,51.247bp) and (97.01bp,48.413bp)  .. (93.55bp,35.704bp);
  \draw (120bp,108bp) node {(2,2)};
  % Node: A
\begin{scope}
  \definecolor{strokecol}{rgb}{0.0,0.0,0.0};
  \pgfsetstrokecolor{strokecol}
  \draw (87bp,198bp) ellipse (27bp and 18bp);
  \draw (87bp,198bp) node {1};
\end{scope}
  % Node: C
\begin{scope}
  \definecolor{strokecol}{rgb}{0.0,0.0,0.0};
  \pgfsetstrokecolor{strokecol}
  \draw (87bp,18bp) ellipse (27bp and 18bp);
  \draw (87bp,18bp) node {3};
\end{scope}
  % Node: B
\begin{scope}
  \definecolor{strokecol}{rgb}{0.0,0.0,0.0};
  \pgfsetstrokecolor{strokecol}
  \draw (27bp,108bp) ellipse (27bp and 18bp);
  \draw (27bp,108bp) node {2};
\end{scope}
  % Node: E
\begin{scope}
  \definecolor{strokecol}{rgb}{0.0,0.0,0.0};
  \pgfsetstrokecolor{strokecol}
  \draw (272bp,108bp) ellipse (27bp and 18bp);
  \draw (272bp,108bp) node {5};
\end{scope}
  % Node: D
\begin{scope}
  \definecolor{strokecol}{rgb}{0.0,0.0,0.0};
  \pgfsetstrokecolor{strokecol}
  \draw (272bp,198bp) ellipse (27bp and 18bp);
  \draw (272bp,198bp) node {4};
\end{scope}
\end{tikzpicture}

\caption{The tileset of Kari and Culik represented as a labeled graph.}
\label{fig:karigraph}
\end{figure}
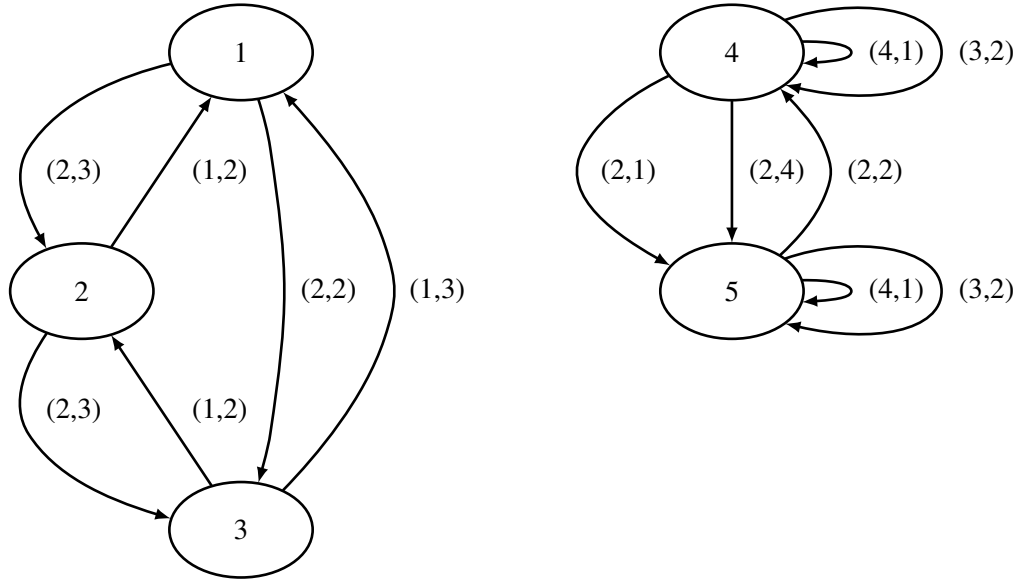
It is quite clear that a tiling of a row is equivalent to a biinfinite
path in this graph. We will now describe more precisely the connection.

First, some definitions. A labeled (multi)graph where each edge is labeled
by a pair of symbols  will be called a \emph{$1$-labeled graph}.

Given a $1$-labeled graph $G$, a pair of biinfinite words
$(u^1,u^2)$ is
\emph{compatible} if $(u^1,u^2)$ correspond to a biinfinite path on the
graph. That is, there exists vertices $(v_i)_{i \in \mathbb{Z}}$ so
that for all $i \in \ZZ$ there is an edge from $v_i$ to $v_{i+1}$
labeled $(u^1_i, u^2_i)$.
A \emph{tiling by $G$} is a biinfinite sequences of biinfinite
words $(u^j)_{j \in \ZZ}$ so that $(u^j, u^{j+1})$ is compatible for
all $j$.

Now it is clear that, upto notations, a 1-labeled graph $G$ with $n$ edges and
$k$ vertices is exactly the same as a set $\tau$ of $n$ Wang tiles with 
$k$ colors on its east/west wide, and that a tiling by $G$ exists if
and only if a tiling by $\tau$ exists.

\subsection{The parameter}
For a given set of Wang tiles $\tau$, we denote by $n(\tau)$ (or
simply $n$) the number of tiles of $\tau$ and by $c(\tau)$ (or simply
$c$) the maximum over all four sides of the number of colors.
By definition, it is clear that $c(\tau) \leq n(\tau)$.

\textbf{\emph{By rotating the Wang tiles, we may suppose w.lo.g.,
	and we do from now, that the maximum number of colors is reached
	on the west side.}}

If we view the set of Wang tiles as a 1-labeled graph $G$, $n(\tau)$
represents the number of edges of $G$, and $c(\tau)$ the number of
vertices of $G$ (more precisely the number of vertices with non-zero
outdegree).

It is obvious that $n(\tau)$ and $c(\tau)$ are not good parameters, in
the sense that the Domino problem is trivially decidable for a fixed
$n$ (resp. $c$), as there are only finitely many tilesets with $n$
tiles (resp. using at most $c$ colors on each side).

However we now observe that $n(\tau)-c(\tau)$ might be a good parameter.
Indeed, if the number of colors on the west side is roughly the same
as the number of tiles, then this means that for almost every tile, there
usually will be at most one tile that can be put on its right.
The
tiling problem becomes highly constrained and we might expect the
problem to be easier. This is indeed the case:

\begin{theorem}
	\label{thm:minima}
	The domino problem for tilesets of parameter $n-c \leq 1$
	is decidable. More precisely, a tileset with parameter $n-c \leq
	1$ admits a tiling if and only if it admits a periodic tiling.	
\end{theorem}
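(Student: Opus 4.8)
The plan is to prove the non-trivial direction: if a tileset $\tau$ with $n - c \le 1$ tiles the plane, it tiles it periodically (the converse being immediate). By the remark in the introduction it suffices to exhibit a tiling that is periodic in the horizontal direction, and in fact all the tilings I will build are doubly periodic. I will argue entirely through the $1$-labeled graph $G$ of $\tau$ (vertices $=$ west/east colours, edges $=$ tiles, rows $=$ bi-infinite paths, plane tilings $=$ vertically compatible bi-infinite stacks of paths), after first discarding every tile lying on no bi-infinite path; this changes neither the relevant structure of $G$ nor its set of plane tilings, and makes $G$ \emph{trimmed}, i.e.\ every vertex has positive in- and out-degree.

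The first step is the structural consequence of the hypothesis. Since the maximum number of colours $c$ is attained on the west side, the $c$ vertices of positive out-degree carry all $n$ edges, so $\sum_v (\mathrm{outdeg}\, v - 1) = n - c \le 1$; hence in $G$ at most one vertex $v^\star$ has out-degree $\ge 2$ (and then exactly $2$), while all others have out-degree $1$. This persists under trimming, and then the symmetric count on in-degrees in the trimmed graph shows that either (A) $G$ is a disjoint union of simple cycles (this occurs whenever $n = c$, and also when $n = c+1$ but the branching vertex disappears under trimming), or (B) $G$ is a disjoint union of simple cycles together with one further edge $e^\star$ running from $v^\star$ to some vertex $w^\star$.

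In case (A) any bi-infinite path simply runs around one of the finitely many simple cycles with some phase, so it is periodic and is determined by any single tile on it; therefore at most $n$ distinct rows occur in a tiling $f$. The pigeonhole principle gives two equal rows $r_a = r_b$ with $a < b \le n$, and stacking the slab of rows $a, \dots, b-1$ periodically yields a legal tiling — the seam is legal because $\mathrm{north}(r_{b-1}) = \mathrm{south}(r_b) = \mathrm{south}(r_a)$ — which is periodic of period $b-a$ vertically and of period $\mathrm{lcm}$ of the periods of $r_a,\dots,r_{b-1}$ horizontally, hence a periodic tiling of $\tau$.

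In case (B) the edge $e^\star$ is the source of all the difficulty. If $v^\star$ and $w^\star$ lie on different cycles of $G \setminus e^\star$, then $e^\star$ lies on no cycle of $G$, hence is traversed at most once by any bi-infinite path; choosing $f$ via Lemma~\ref{lemma:twice} (so that every occurring tile occurs twice in some row) then forces $e^\star$ not to occur at all, and we are back in case (A). The remaining possibility — $v^\star$ and $w^\star$ lying on a common cycle $\delta$ — is where I expect the real work to lie: $G$ now has exactly two simple cycles through $v^\star$, namely $\delta$ and one cycle using $e^\star$ and sharing an arc with $\delta$, so a row through $v^\star$ is coded by a bi-infinite ``choice word'', which need not be periodic. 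The plan here is to show that the vertical colour-matching between consecutive rows is rigid enough to pin this down: either the shared arc's complement in each cycle carries the same north- and south-colour words, so the choices are colour-irrelevant and one may take every row to run around the single cycle through $e^\star$ (reducing to case (A)), or two rows with differing choice words fall permanently out of phase and cannot be compatible, which forces all rows to carry one common choice word; in either case a compactness argument pushing the remaining freedom off to infinity produces a tiling with only finitely many distinct, periodic rows, hence a periodic tiling. A few genuinely degenerate configurations of $e^\star$ (a self-loop at $v^\star$, or an edge parallel to one of the edges of $\delta$) must be handled by hand, the trivial extreme being the constant tiling by the self-loop tile when its north and south colours coincide. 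This last case analysis is the main obstacle.
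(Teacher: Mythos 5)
Your reduction to the graph structure is sound and closely parallels the paper's own route: trimming away tiles on no bi-infinite path is Proposition~\ref{prop:etapeun}, disposing of the pure-cycle components by pigeonhole is Proposition~\ref{prop:etapedeux}, and the observation that at most one vertex $v^\star$ branches, so that every row is a bi-infinite concatenation of the (at most two) simple cycles through $v^\star$, is exactly the paper's reduction (Theorem~\ref{thm:wangtobar}) of parameter-$1$ tilesets to barsets with two Wang bars. But your proof stops where the theorem actually begins. The case you call ``the remaining possibility'' --- rows coded by arbitrary bi-infinite choice words over two blocks based at $v^\star$ --- \emph{is} the content of the result, and you offer only a plan for it. The asserted dichotomy (either the two blocks are colour-interchangeable, or rows with different choice words ``fall permanently out of phase'', forcing a single common choice word) is neither proved nor true in the simple form stated: the two cycles generally have different lengths, so consecutive rows with different block sequences can still be vertically compatible at nontrivial offsets, and even a single common non-periodic choice word would still have to be replaced by a periodic one. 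The paper handles precisely this situation in its last section with Lemma~\ref{lemma:gamma} (a block sitting astride two copies of itself forces periodicity), a second lemma showing that a tiling avoiding the overlap pattern must be periodic, and a final analysis of the minimal run-lengths $p$ and $q$ of the two block types to extract a periodic row pattern. None of that is supplied or replaced by an alternative argument, and your closing sentences concede as much.

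A secondary slip: your structural dichotomy (A)/(B) is incomplete. A trimmed graph with $n-c=1$ need not be a disjoint union of simple cycles plus one chord; it can be two cycles sharing only the branch vertex (e.g.\ the edges $a\to b$, $b\to a$, $a\to c$, $c\to a$), which is not obtained from a spanning disjoint union of cycles by adding one edge, and for which $G\setminus e^\star$ is not a union of cycles for either choice of $e^\star$. This configuration poses the same difficulty as your ``common cycle'' subcase, so it does not change the nature of the gap, but the classification on which your case analysis rests should be corrected.
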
	

On the other hand, if $n-c$ is big, we have more choices for the next
tile to put. We will prove:
\begin{theorem}
	\label{thm:maxima}
	The domino problem for tilesets of parameter $n - c = 43$ is
	undecidable. It remains undecidable for higher parameters.
\end{theorem}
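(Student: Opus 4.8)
\textbf{Step 1: Wang bars and the parameter.} I will first recast the bound $n-c$ through the notion of a \emph{Wang bar}: a horizontally inflated Wang tile, i.e. a west colour, an east colour, and two words of the same length $w$ (the \emph{width}) giving the north and south sides of a $1\times w$ strip. A finite set $\mathcal B$ of Wang bars \emph{tiles the plane} if $\ZZ^2$ can be covered by translated copies of the bars with equal colours along all shared edges; equivalently, each row is a biinfinite concatenation of bars whose east/west colours agree at the joints and whose biinfinite north word equals the south word of the row above, with \emph{no} constraint relating the decompositions into bars of two consecutive rows. Cutting each bar into its $w$ unit cells and assigning a fresh, private colour to each of its $w-1$ internal vertical edges yields a Wang tileset $\tau(\mathcal B)$: privacy of the internal colours forces the cells of a bar to stay horizontally consecutive, while nothing couples the horizontal positions of different rows, so tilings by $\mathcal B$ and by $\tau(\mathcal B)$ correspond bijectively. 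If $\mathcal B$ has $B$ bars of total width $W$, with $p$ (resp. $q$) distinct external west (resp. east) colours and a fixed-size north/south alphabet, then $\tau(\mathcal B)$ has $n=W$ tiles, its west and east sides carry $p+W-B$ and $q+W-B$ colours, and, $W$ being large, these carry the most; hence $c=\max(p,q)+W-B$ and $n-c=B-\max(p,q)\le B-1$. So undecidability of the Domino problem for a fixed finite set of Wang bars transfers to undecidability of the Domino problem restricted to $n-c\le B-1$ (and in fact the passage is an equivalence, the converse being an easy graph counting on the east/west graph, though only this direction is needed).

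\textbf{Step 2: simulating an arbitrary computation with boundedly many bars.} The core of the theorem is that a \emph{fixed} number of Wang bars already encodes an \emph{arbitrary} computation: unlike a Wang tile, a Wang bar can be arbitrarily long, so one bar can carry the entire description of the instance along its length. Starting from an undecidable problem on finite words --- concretely the halting problem for a fixed universal Turing machine, the input word $x$ playing the role of the instance (equivalently, via Berger's theorem, the Domino problem for a general tileset) --- I would encode $x$ into a single long ``program'' bar and build a fixed, $x$-independent family of ``interpreter'' bars that read the program bar and, row by row, lay out the space--time diagram of the computation coded by $x$. The freedom to shift consecutive rows with respect to one another is exactly what lets the interpreter bars realise the motion of the head and the shifts of a self-similar scaffolding, in the spirit of the constructions of Kari and Robinson; at the same time this freedom must be tightly fenced in so that only genuine computations, not spurious configurations, extend to a tiling. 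One then shows that $\mathcal B_x$ tiles the plane if and only if the computation coded by $x$ never halts (equivalently, the coded tileset tiles the plane), which is undecidable.

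\textbf{Step 3: optimising the constant and higher parameters.} It then remains to count the bars produced in Step~2 and to compress the construction --- merging gadgets, re-using colours, folding several bookkeeping roles onto one bar --- until $\tau(\mathcal B_x)$ satisfies $n-c\le 43$; by Step~1 this already yields undecidability for that value, and a trivial padding argument upgrades it to exactly $43$ and to every larger parameter: adjoining a tile whose north colour occurs on no south side (hence which can never be placed, so which does not affect tileability, and which adds nothing to the west colour count) raises $n$ by one without raising $c$, so $n-c$ may be driven upward one step at a time.

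\textbf{Where the difficulty lies.} The real work --- and the reason the threshold is a concrete number like $43$ rather than, say, $2$ --- is entirely in Steps~2 and~3: carrying out a complete simulation of Turing-complete computation with a \emph{constant} number of bar types, and then squeezing that constant. Every $x$-independent gadget one is forced to introduce costs one unit of the parameter $B-\max(p,q)$, so the value $43$ is the output of a careful, hand-optimised construction rather than of any soft counting argument; making the inter-row shift freedom simultaneously strong enough to drive the computation and restrictive enough to exclude non-computations is the delicate point.
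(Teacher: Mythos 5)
Your Step 1 is correct and is essentially the paper's own reduction: splitting each bar of a $k$-bar set into unit tiles with a fresh private colour on every internal vertical edge increases the tile count and the east/west colour count by the same amount, so the resulting tileset has parameter at most $k-1$ and tiles the plane iff the barset does. Your padding remark in Step 3 for ``higher parameters'' is also fine in substance (one must only be slightly careful that the unusable extra tile does not create a new colour on whichever side attains the maximum, but this is easily arranged).

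The genuine gap is Step 2, which is where the entire content of the theorem lives and which you do not carry out. You state that one ``would'' encode the instance into a single long program bar and build a fixed family of interpreter bars, but you give no gadgets, no mechanism forcing the rows to align so that only genuine simulations extend to tilings, no argument excluding spurious configurations, and no derivation of the constant $44$ (hence $43$); you acknowledge as much yourself. This is not a sketch of a known routine argument that a reader could fill in: it is precisely the hard part. Moreover, your first-listed route (a space--time diagram of a universal Turing machine) is substantially harder than what is needed, because making ``tiles the plane iff the machine never halts'' work requires rebuilding an aperiodic hierarchical scaffolding \`a la Berger/Robinson inside the bar formalism. The paper instead reduces from the Domino problem itself, following Ollinger's polyomino construction: the whole tileset $\tau$ is written in unary onto one ``content'' bar of length about $(2C+1)n$, and $43$ further bars ($16$ fixed ones, plus a box, two fillers and two handles built from $27$ bars depending only on $n$ and $C$) force each content bar to sit inside a rigid frame; the horizontal offset of the content bar within its frame selects which tile of $\tau$ that cell simulates, and the unary north/south encodings enforce the matching conditions between adjacent simulated tiles. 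If you intend to complete your plan, reducing from the Domino problem (your parenthetical alternative) and designing the alignment-forcing gadgets explicitly is the step you must actually supply.
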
	

What happens between $1$ and $43$ is unknown. What is clear is
that the situation becomes more complex very quickly:
\begin{proposition}
	There exists a tileset $\tau$ of parameter $n-c = 8$ that admits a
	tiling but no periodic tiling (i.e. that is aperiodic).
\end{proposition}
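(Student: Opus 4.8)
The plan is to observe that the aperiodic tileset already exhibited above does the job. Let $\tau$ be the Kari--Culik tileset of Fig.~\ref{fig:kariculik}. By \cite{Culik} (refining the construction of \cite{Kari14}), $\tau$ tiles the plane but admits no periodic tiling, so $\tau$ is aperiodic; it therefore remains only to compute the parameter $n(\tau)-c(\tau)$.

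First I would count the tiles directly from the figure, obtaining $n(\tau)=13$. Then, reading each tile as a quadruple $(t_w,t_e,t_n,t_s)$, I would record the set of colours appearing on each side: the west side uses $\{1,2,3,4,5\}$, the east side uses $\{1,2,3,4,5\}$, and the north and south sides each use only $\{1,2,3,4\}$. Hence $c(\tau)=\max(5,5,4,4)=5$, so $n(\tau)-c(\tau)=13-5=8$. Note that the maximum number of colours is attained on the west side, as required by the normalisation we fixed above, so no rotation of the tiles is needed.

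There is essentially no obstacle here beyond the bookkeeping: the entire content of the proof is this colour count. The one point worth stating explicitly is that $5$ really is the number of distinct colours on the west side of $\tau$ (each of $1,\dots,5$ occurs as a west colour of some tile), so the parameter of $\tau$ is exactly $8$ and not smaller. In contrast with Theorem~\ref{thm:minima}, this shows that aperiodic behaviour already appears for a bounded, and in fact rather small, value of the parameter $n-c$, while the question of whether it can appear for $2 \le n-c \le 7$ is left open.
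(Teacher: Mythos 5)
Your proposal is correct and is exactly the paper's argument: the paper simply asserts that the proposition is witnessed by the Kari--Culik tileset of Fig.~\ref{fig:kariculik}, whose aperiodicity is taken from \cite{Culik}, and the only content is the count $n=13$, $c=\max(5,5,4,4)=5$, giving parameter $8$. Your colour bookkeeping (west and east each use $\{1,\dots,5\}$, north and south each use $\{1,\dots,4\}$) matches the figure, so nothing is missing.
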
	
This is exactly the tileset of Fig.\ref{fig:kariculik}.

The rest of the paper will be organized as follows.
In the next section, we will introduce a new object, called Wang bars,
and prove that tileability with a tileset of parameter $n-c = k$
is (somewhat) equivalent to tileability with $O(k)$ Wang bars. Then we
will prove in the following sections that: (a) tileabilty with 44
Wang bars is undecidable (thus proving Theorem\ref{thm:maxima}) and that (b)
tileability with 2 Wang bars is decidable (thus proving Theorem\ref{thm:minima}).

\section{Wang bars}

A Wang bar is a Wang tile which is bigger
horizontally.  Formally, a Wang bar $b$ is a quadruplet $(b_e, b_w,
b_n, b_s)$ where $b_n$ and $b_s$ are words of the same length over
some alphabet $C$. We denote by $|b|$ the length of the word $b_n$.
Letters of $b_n$ are numeroted from $1$ to $|b|$. It is clear from the
definition that a Wang tile is a Wang bar of length $1$.

A \emph{barset} is a set of Wang bars. See Fig.~\ref{fig:barwang} for
an example.

A tiling of the plane by Wang bars is a partition of the plane by Wang
bars so that consecutive Wang bars have the same colors on their shared
edge.
Here is a formal yet nonintuitive definition.

\begin{definition}
Let $B$ be a barset.
A tiling of the plane by $B$ is a pair of a map $f$ from $\mathbb{Z}^2$ to
$B$ ($f(i,j)$ is called the bar at $(i,j)$), and a map $p$ from
$\mathbb{Z}^2$ to $\mathbb{N}$ ($p(i,j)$ is the \emph{position} inside
the bar $(i,j)$) so that:
\begin{itemize}
	\item $0 < p(i,j) \leq |f(i,j)|$ (the position inside the bar is less
	  than the length of the bar)
	\item If $p(i,j) < |f(i,j)|$ then $p(i+1,j) = p(i,j)+1$ and
	  $f(i+1,j) = f(i,j)$ (if we are inside the bar, this is still the
	  same bar)
	\item If $p(i,j)=|f(i,j)|$ (we are at the end of the bar) 
	  then $p(i+1,j) = 1$ (a new bar starts) and $f(i,j)_e =
	  f(i+1,j)_w$ (horizontal colors match)
	\item $(f(i,j)_n)_{p(i,j)} = (f(i,j+1)_s)_{p(i,j+1)}$ (vertical colors match)
\end{itemize}	
\end{definition}

\newcommand\wbarre[4]{
        \draw[line width=0.1pt] (0,0) -- +(2,2) -- +(4,0) -- +(0,0);
%\begin{scope}[xshift=4cm, yshift=4cm]
%        \draw[line width=0.1pt,fill=black!30] (0,0) -- +(-2,-2) -- +(0,-4) -- +(0,0); 
%        \draw[line width=0.1pt] (0,0) -- +(-2,-2) -- +(-4,0) -- +(0,0);
%\end{scope}%
                \draw (1,2) node {#1};
                \draw (3,2) node {#2};
                \draw (2,3.5) node {#3};
                \draw (2,.5) node {#4};
}
\newcommand\bara{
\draw(0,0) rectangle (12,4);
\draw[line width=0.1pt, fill=black!30] (0,0) -- +(2,2) -- +(0,4) -- +(0,0);
\draw[line width=0.1pt, fill=black!30] (12,0) -- +(-2,2) -- +(0,4) -- +(0,0);
\draw (1,2) node {0};
\draw (11,2) node {0};
\draw (2,1) node {0};
\draw (6,1) node {1};
\draw (10,1) node {0};
\draw (2,3) node {1};
\draw (6,3) node {0};
\draw (10,3) node {2};
}

\newcommand\barb{
\draw(0,0) rectangle (8,4);
\draw[line width=0.1pt, fill=black!30] (0,0) -- +(2,2) -- +(0,4) -- +(0,0);
\draw[line width=0.1pt, fill=black!30] (8,0) -- +(-2,2) -- +(0,4) -- +(0,0);
\draw (1,2) node {0};
\draw (7,2) node {1};
\draw (2,1) node {2};
\draw (6,1) node {1};
\draw (2,3) node {1};
\draw (6,3) node {0};
}

\newcommand\barc{
\draw(0,0) rectangle (16,4);
\draw[line width=0.1pt, fill=black!30] (0,0) -- +(2,2) -- +(0,4) -- +(0,0);
\draw[line width=0.1pt, fill=black!30] (16,0) -- +(-2,2) -- +(0,4) -- +(0,0);
\draw (1,2) node {1};
\draw (15,2) node {0};
\draw (2,1) node {0};
\draw (6,1) node {2};
\draw (10,1) node {1};
\draw (14,1) node {0};
\draw (2,3) node {0};
\draw (6,3) node {1};
\draw (10,3) node {0};
\draw (14,3) node {0};
}
\begin{figure}[htbp]
\begin{center}
\begin{tikzpicture}[scale=0.3]
	\bara
\begin{scope}[xshift=16cm]
	\barb
\end{scope}	
\begin{scope}[yshift=-6cm, xshift=4cm]
	\barc
\end{scope}	
\end{tikzpicture}
\end{center}
\caption{A set of three Wang bars. The first one is formally defined
  as $(0,0,102,010)$.}
\label{fig:barwang}
\end{figure}
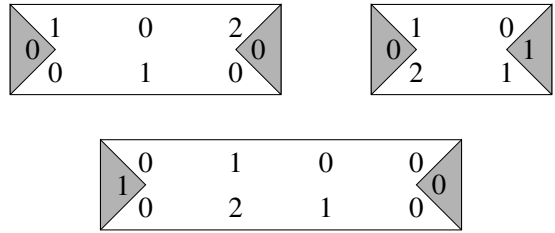

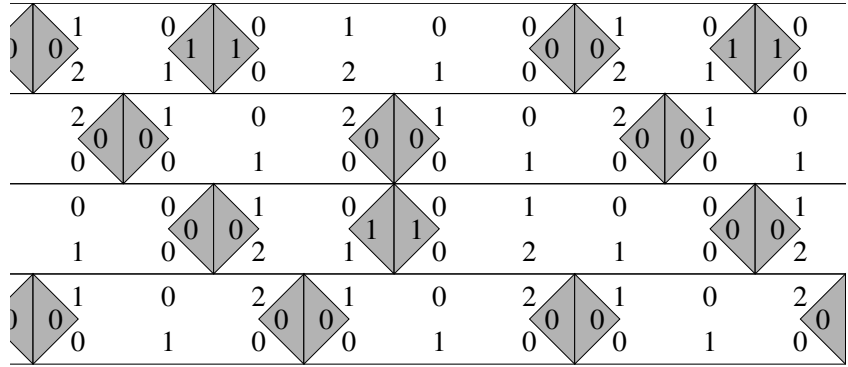
\begin{figure}[htb]
\begin{center}
\begin{tikzpicture}[scale=0.3]
\clip (-1,-1) rectangle (36,16);
\begin{scope}[xshift=-12cm]	\bara \end{scope}
\begin{scope}[xshift=0cm]	\bara \end{scope}
\begin{scope}[xshift=12cm]	\bara \end{scope}
\begin{scope}[xshift=24cm]	\bara \end{scope}
\begin{scope}[xshift=-8cm,yshift=4cm]	\barc \end{scope}
\begin{scope}[xshift=8cm,yshift=4cm]	\barb \end{scope}
\begin{scope}[xshift=16cm,yshift=4cm]	\barc \end{scope}
\begin{scope}[xshift=32cm,yshift=4cm]	\barb \end{scope}
\begin{scope}[xshift=-8cm,yshift=8cm]	\bara \end{scope}
\begin{scope}[xshift=4cm,yshift=8cm]	\bara \end{scope}
\begin{scope}[xshift=16cm,yshift=8cm]	\bara \end{scope}
\begin{scope}[xshift=28cm,yshift=8cm]	\bara \end{scope}
\begin{scope}[xshift=-16cm,yshift=12cm]	\barc \end{scope}
\begin{scope}[xshift=0cm,yshift=12cm]	\barb \end{scope}
\begin{scope}[xshift=8cm,yshift=12cm]	\barc \end{scope}
\begin{scope}[xshift=24cm,yshift=12cm]	\barb \end{scope}
\begin{scope}[xshift=32cm,yshift=12cm]	\barc \end{scope}
\end{tikzpicture}
\end{center}
\caption{A fragment of a tiling by the three bars of
  Figure\ref{fig:barwang}. The
  acute reader may try to convince himself that there exists a periodic
  tiling.}
	\label{fig:bartile}
\end{figure}
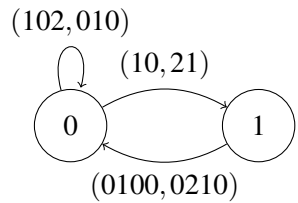
\begin{figure}[htbp]
\begin{center}	
\begin{tikzpicture}[->, node distance=2.5cm]
\node[state] (A) {0};
\node[state] (B) [right of=A] {1};
\path (A) edge[bend left] node[above] {$(10,21)$} (B);
\path (B) edge[bend left] node[below] {$(0100,0210)$} (A);
\path (A) edge[loop above] node[above] {$(102,010)$} (A);
\end{tikzpicture}
\end{center}	
	\caption{The example of Fig. \ref{fig:barwang} in the graph formalism}
	\label{fig:bargraph}
\end{figure}

An example is provided in Fig.\ref{fig:bartile}.

Wang bars have of course a graph counterpart. We define a
\emph{$\omega$-labeled graph} to be a directed (multi)graph
where every edge is labeled by a pair of words of the same length.
We do not define explicitely how a $\omega$-labeled graph tiles the
plane, but it should be clear. An example of such a graph is provided
in Fig. \ref{fig:bargraph}

It is quite clear that tileability with Wang bars can be reduced to
tileability with Wang tiles: Just consider a Wang bar of length $p$ as
$p$ Wang tiles.

\begin{theorem}
	Tileability with $k$ bars is many-one reducible to tileability of Wang tiles
	with parameter at most $k-1$. That is, every barset $B$ with $k$ bars
	can be transformed into a tileset $W$ with parameter at most  $k-1$
	so that $B$ tiles the plan iff $W$ tiles the plane.
\end{theorem}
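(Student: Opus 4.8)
The plan is to realize each Wang bar as a short "track" of Wang tiles so that the total tile/color bookkeeping gives parameter at most $k-1$. Write $B=\{b^1,\dots,b^k\}$ and let $\ell_i=|b^i|$. For the bar $b^i=( (b^i)_e,(b^i)_w,(b^i)_n,(b^i)_s)$ I would introduce $\ell_i$ Wang tiles $t^i_1,\dots,t^i_{\ell_i}$, one per position, with north colors $((b^i)_n)_1,\dots,((b^i)_n)_{\ell_i}$ and south colors $((b^i)_s)_1,\dots,((b^i)_s)_{\ell_i}$ (these inherit the alphabet $C$ of $B$, so the vertical directions contribute no excess). The horizontal (west/east) colors are what must be chosen carefully: inside the bar I need a fresh "internal" color $\sigma^i_j$ between $t^i_j$ and $t^i_{j+1}$ for $1\le j<\ell_i$, while the west color of $t^i_1$ is the genuine $(b^i)_w$ and the east color of $t^i_{\ell_i}$ is the genuine $(b^i)_e$. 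A tiling of the plane by the resulting tileset $W$ then corresponds exactly to a tiling by $B$: the internal colors force that once $t^i_1$ is placed the whole track $t^i_1\cdots t^i_{\ell_i}$ follows, horizontal matching at the genuine colors is exactly bar-to-bar matching, and vertical matching is position-by-position matching, which is precisely the last clause of the tiling definition for $B$. (This equivalence is the routine, "it should be clear" half already flagged in the text.)

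Next I count. The number of tiles is $n(W)=\sum_{i=1}^k \ell_i$. For the colors, recall the convention that $c(W)$ is the \emph{maximum} over the four sides of the number of colors, and by the earlier normalization we only care whether this maximum, achieved on the west side, is small; so the real task is to bound the number of west colors (equivalently east colors — note each track has one genuine west endpoint and one genuine east endpoint, so these two counts agree up to the identification of genuine colors). The west colors of $W$ are: the internal colors $\sigma^i_j$, of which there are $\sum_i(\ell_i-1)=n(W)-k$, together with the genuine west colors $\{(b^i)_w : 1\le i\le k\}$, of which there are at most $k$. Hence the number of west colors is at most $(n(W)-k)+k=n(W)$, which only gives $n-c\le n - 0$ — not good enough. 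The fix is to be economical with the genuine colors: reuse one of the \emph{internal} colors, or merge endpoints, so that the genuine west colors do not all count as "new". Concretely, I would arrange that one internal color is reused as a genuine color (this costs nothing in correctness as long as the reused color never appears at an endpoint of a \emph{different} bar in a way that creates spurious matchings — achievable, e.g., by padding every bar to length $\ge 2$ first, which does not change tileability), so that among the $k$ genuine west colors at least one is already among the $\sigma^i_j$. Iterating/optimizing this identification down to the bone, one gets the west-color count down to $n(W)-k+1$, i.e. $n(W)-c(W)\le k-1$, provided one also checks the other three sides do not have more colors than this (north/south use only $C$, whose size we may assume is $\le k-1$ after a harmless recoding, and east mirrors west).

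The main obstacle I expect is exactly this color-accounting optimization: one must simultaneously (i) keep enough internal colors distinct that a track cannot "jump" between bars midway, (ii) keep the genuine horizontal colors distinct enough that bar-to-bar adjacency in $W$ reproduces exactly bar-to-bar adjacency in $B$ and nothing more, and (iii) still make the set of internal colors and the set of genuine colors overlap by exactly enough to shave the count from $n$ to $n-k+1$. The padding trick (replace each bar of length $1$ by an equivalent bar of length $2$, etc., so every bar has an internal color available) is what makes (iii) compatible with (i)–(ii). Once the construction is set up with padded bars, verifying $n(W)-c(W)\le k-1$ is a direct count, and verifying "$B$ tiles the plane iff $W$ does" is the straightforward correspondence sketched above; I would present the padding and the color-reuse scheme first, then the count, then the (easy) equivalence.
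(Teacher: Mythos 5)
Your construction is the same as the paper's (split each bar of length $\ell$ into $\ell$ Wang tiles, gluing consecutive tiles with fresh internal colors), but your counting argument is inverted, and the ``fixes'' you build on top of it are both unnecessary and unsound. Recall that $c(W)$ is the \emph{maximum} over the four sides of the number of colors, so to prove $n(W)-c(W)\le k-1$ you must show that some side of $W$ has \emph{many} colors, namely at least $n(W)-k+1$; an upper bound on the number of west colors is useless here, and having even more colors on some side only makes the parameter smaller. Your construction already delivers the needed lower bound with no further work: the east side of $W$ carries the $\sum_i(\ell_i-1)=n(W)-k$ fresh internal colors (each $\sigma^i_j$ is the east color of $t^i_j$) plus the genuine east colors of $B$, of which there is at least one and which are disjoint from the fresh ones. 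Hence $c(W)\ge n(W)-k+1$ and $n(W)-c(W)\le k-1$ immediately. (This is exactly the paper's computation: splitting a bar into $p$ tiles adds $p-1$ tiles and $p-1$ east colors, so the difference between the tile count and the east-color count is invariant and equals $k-c\le k-1$, where $c\ge 1$ is the number of east colors of $B$.)

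Because you read the inequality backwards, you set out to make the color count \emph{smaller}, and the devices you invoke for that would break the reduction. Identifying a genuine boundary color $(b^i)_w$ with an internal color $\sigma^j_m$ makes the first tile of bar $i$ attachable to the right of $t^j_m$, so a row can leave the track of bar $j$ in mid-bar; padding bars to length $\ge 2$ does nothing to prevent this. Likewise, recoding the vertical alphabet $C$ down to $\le k-1$ symbols is not harmless: it merges vertical colors and can create tilings of $W$ with no counterpart in $B$. Drop all of these modifications; the plain splitting plus the lower bound on the east-color count is the whole proof.
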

\begin{proof}
	Let $c$ be the number of colors that appear in the east side of a
	Wang bar of $B$ and $k$ be the number of bars.
	
	We now split each Wang bar into Wang tiles, introducing new colors
	each time. If we see a Wang tile as a Wang bar, it is easy to see
	that each time we split a Wang bar into $p$ Wang tiles(bars), we
	increase the number of bars by $p-1$ and the number of colors by
	$p-1$, hence the difference between the number of bars and the
	number of colors on the east side stays constant.
	   
    As a consequence, the tileset $\tau$ we obtain at the end 
	has $n(\tau) = p+k$ tiles and $p+c$ colors  in the
	east side (so that $c(\tau) \geq p+c$), hence is of parameter
	$n(\tau) - c(\tau) \leq k-c \leq k-1$.
\end{proof}	

The converse is less clear. We will use the graph formalism to
proceed. We start from a tileset with parameter $n-c = k$. We suppose
w.l.o.g. that the maximum number of colors is reached on the west
side.  In the graph formalism, this means that $G$ has $n$ edges, and
$c$ vertices with at least one outgoing edge.

First we recall some definitions from digraph theory. 
The \emph{outdegree} (resp. \emph{indegree}) 
of a vertex $v$ is the number of labeled edges that start from $v$
(resp. end in $v$)

A (directed) \emph{path} in $G$ from $u$ to $v$ 
is a sequence $u = u_1 \dots u_p = v$ (possibly with $p = 1$) of vertices so that for
all $i < p$ there exists an edge from $u_i$ to $u_{i+1}$.

\begin{proposition}
	\label{prop:etapeun}
	Let $\tau$ be a tileset represented as a graph $G$.
	Suppose that there exists two vertices $u$, $v$ so that:
	\begin{itemize}
		\item There exists an edge from $u$ to $v$, corresponding to
		  the tile $t$
		\item There exists no path from $v$ to $u$.
	\end{itemize}		
	Then $\tau$ tiles the plane if and only if $\tau - \{t\}$ tiles
	the plane.
\end{proposition}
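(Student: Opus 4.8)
The plan is to prove the two implications separately. The implication ``$\tau - \{t\}$ tiles the plane $\Rightarrow$ $\tau$ tiles the plane'' is immediate: any tiling by a subset of $\tau$ is in particular a tiling by $\tau$. So all the content is in the converse.

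For the converse, suppose $\tau$ tiles the plane. First I would invoke Lemma~\ref{lemma:twice} to obtain a tiling $f$ by $\tau$ with the property that every tile that occurs in $f$ occurs at least twice on some row. I then claim that $t$ cannot occur in such a tiling; this immediately yields that $f$ is in fact a tiling by $\tau - \{t\}$, which finishes the proof.

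To prove the claim I would work in the graph formalism of Section~2.2: a row of $f$ is a biinfinite path in $G$, and the tile $t$ corresponds to the edge from $u$ to $v$. Suppose $t$ occurred on some row at two positions $i < i'$; write the row as a biinfinite sequence of vertices $(w_\ell)_{\ell \in \ZZ}$ with an edge from $w_\ell$ to $w_{\ell+1}$ realising the tile at position $\ell$. Using the edge $t$ at position $i$ means $w_i = u$ and $w_{i+1} = v$, and using it at position $i'$ means $w_{i'} = u$. Then $w_{i+1}, w_{i+2}, \dots, w_{i'}$ is a (finite, directed) path in $G$ from $v$ to $u$, contradicting the hypothesis that no such path exists. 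Hence $t$ occurs at most once on each row of \emph{any} tiling; in particular it cannot occur in $f$, where everything that occurs must occur twice on some row.

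The only slightly delicate point is the bookkeeping in this last step: one must be careful that ``the edge used at position $i$ goes from $u$ to $v$'' forces the path to visit $v$ right after position $i$ and to visit $u$ right before position $i'$, so that the intervening vertices genuinely form a directed path from $v$ to $u$ (of length at least $1$, but length $0$ is also fine since that would mean $v = u$, still a path from $v$ to $u$). This is routine once the definitions of ``path'' and of ``row = biinfinite path'' are unwound carefully, and I do not expect any real obstacle.
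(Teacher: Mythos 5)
Your proposal is correct and follows exactly the paper's argument: take the tiling from Lemma~\ref{lemma:twice} in which every occurring tile occurs twice on some row, and observe that two occurrences of $t$ on one row would yield a directed path from $v$ back to $u$ in $G$, which is excluded by hypothesis. The extra bookkeeping you supply (including the degenerate case $v=u$ via the length-$0$ path) is just a careful unwinding of the paper's one-line remark that ``it is not possible to go back from its east side to its west side.''
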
	
\begin{proof}
One direction is clear. Suppose that $\tau$ tiles the plane. 
We know from Lemma \ref{lemma:twice} that there exists a tiling $f$, where each tile 
that appears in $f$ appears twice in some row.  $t$ cannot appear
in such a tiling, as it is not possible to go back from 
its east side (the color $v$) to the west side (the color $u$). Hence
the tiling $f$ does not use $t$ and $\tau - \{t\}$ tiles the plane.
\end{proof}	

We can now suppose w.l.o.g that there is no such tile $t$ in $\tau$.
In particular, in the graph, every vertex is of outdegree (resp.
indegree) at least one. Moreover, by a straightforward induction, this
implies that if there is a path from $u$ to $v$, then there is a path
from $v$ to $u$ (In digraph terms, all connected components are
strongly connected).

For two vertices $u$ and $v$, we say that $u \sim_G v$ if there exists a
path from $u$ to $v$. $\sim_G$ is in our case an equivalence relation (it is reflexive because every vertex is of
outdegree at least one). Note that on each row of a tiling by $\tau$, the
horizontal colors belong to the same equivalence class.

These equivalence classes are usually called strongly connected
components in the theory of digraphs.

\begin{proposition}
	\label{prop:etapedeux}
	Let $\tau$ be a tileset represented as a graph $G$ for which
	$\sim_G$ is an equivalence relation.

    Let $H$ be an equivalence class so that every vertex in $H$ is of
	outdegree exactly one, and $T$ the tiles corresponding to $H$.
	
	Then:	    
	\begin{itemize}
		\item Either $\tau - T$ tiles the plane
		\item Or there is a periodic tiling by $\tau$
		\item Or $\tau$ does not tile the plane
	\end{itemize}
\end{proposition}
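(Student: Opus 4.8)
The plan is to read off the shape of the class $H$, dispose of the two easy outcomes, and then extract a doubly periodic tiling by combining one compactness argument with two pigeonhole steps. \emph{Shape of $H$.} Since every vertex of $H$ has outdegree exactly one and $\sim_G$ is an equivalence relation, the unique outgoing edge of $v\in H$ ends again in $H$ (its endpoint $w$ satisfies $v\sim_G w$, hence $w\sim_G v$, hence $w\in H$); in particular, when $|H|=1$ this forces a self-loop. Thus the subgraph induced on $H$ has exactly $|H|$ edges, and the outgoing-edge map acts transitively on $H$ (any path in $G$ between two vertices of $H$ is forced to follow that map, hence stays in $H$), so the induced subgraph is a single directed cycle through all of $H$; write $m$ for its length, so $T$ is precisely its $m$ edges. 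Consequently, in any tiling of $\tau$, a row whose west side meets $H$ has \emph{all} its west colours in $H$ (the horizontal colours of a row lie in one class), so it uses only tiles of $T$ and, as a bi-infinite sequence of tiles, is the cycle traversed with some phase in $\ZZ/m\ZZ$; its north and south colour words are determined by that phase. Call such a row \emph{cyclic}; every non-cyclic row uses only tiles of $\tau-T$.

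\emph{The two easy cases.} If $\tau$ does not tile the plane we are in the third alternative, and if $\tau-T$ tiles the plane we are in the first. So assume $\tau$ tiles the plane while $\tau-T$ does not; I must build a periodic tiling. By compactness there is an integer $n_0$ such that $\tau-T$ tiles no $n_0\times n_0$ square. Hence in any tiling of $\tau$ every window of $n_0$ consecutive rows contains a cyclic row, for otherwise those $n_0$ rows, restricted to $n_0$ columns, would give a tiling of an $n_0\times n_0$ square by $\tau-T$. Fix a tiling $f$ of $\tau$; it then has infinitely many cyclic rows. Choose $m+1$ of them, at heights $h_0<\ldots<h_m$, with phases $s_0,\ldots,s_m\in\ZZ/m\ZZ$; by pigeonhole $s_i=s_j$ for some $i<j$, so rows $h_i$ and $h_j$ of $f$ are the very same bi-infinite sequence of tiles, and in particular have the same south colour word.

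\emph{Assembling the periodic tiling.} Put $q=h_j-h_i$ and let $B$ be the slab of $f$ formed by rows $h_i,\ldots,h_j-1$. Stacking translated copies of $B$ is legal: at a new horizontal seam the north colours of a topmost row are, in $f$, the south colours of row $h_j$, which equal the south colours of row $h_i$ (the bottom row of the copy above) since those two rows coincide. This produces a tiling that is vertically periodic of period $q$. Its columns now take only finitely many values, each being determined by its $q$ entries over one period, so there are positions $a<b$ whose columns carry the same west colours in every row; since the east colours of column $b-1$ equal the west colours of column $b$, hence of column $a$, repeating the vertical slab of columns $a,\ldots,b-1$ tiles the plane, and the result is periodic in both directions. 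So $\tau$ admits a periodic tiling, which is the second alternative.

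\emph{Where the difficulty lies.} The step I expect to be delicate is the compactness passage from the global fact ``$\tau-T$ tiles no plane'' to the local fact ``cyclic rows recur within every window of bounded height''. Without it nothing rules out a tiling of $\tau$ whose cyclic rows are confined to a bounded band, above which sits an unbounded half-plane tiled by $\tau-T$, which contradicts no hypothesis; one cannot then even guarantee a \emph{second} cyclic row, let alone two with equal phase. Once that recurrence is in hand, selecting the phases and checking that the vertical re-tiling and then the horizontal re-tiling glue correctly is routine.
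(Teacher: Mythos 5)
Your proof is correct, and it follows the same skeleton as the paper's (a row meeting $H$ is forced to be the cycle read off with some phase; find two such rows; stack the block between them to get vertical periodicity), but the supporting arguments are genuinely different and in one place more careful. The paper finds its two $T$-rows by invoking Lemma~\ref{lemma:twice} (applied in the vertical direction) to a tiling minimizing the number of distinct tiles, whereas you get them from a compactness argument: since $\tau-T$ tiles no $n_0\times n_0$ square, cyclic rows must recur within every window of $n_0$ rows. Your version is more self-contained and, as you anticipate in your last paragraph, that recurrence step is exactly what makes the rest go through. More importantly, the paper glues the two $T$-rows in one line by setting $f'(i,j)=f(i,j\bmod p)$, which tacitly requires the south colour words of rows $0$ and $p$ to agree even though the two rows could a priori carry different phases of the cycle; your pigeonhole over the $m+1$ phases to force two literally identical rows closes that gap cleanly (the alternative fix being to allow a horizontal shift when stacking). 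Finally, you carry out the horizontal periodization of the vertically periodic tiling explicitly by repeating a slab of columns, where the paper simply appeals to the standard equivalence between one-directional and full periodicity stated in its first section. Both routes are valid; yours costs a compactness argument and two pigeonhole steps but yields a proof with no hidden appeals.
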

\begin{proof}
Suppose that $\tau$ tiles the plane. By Lemma \ref{lemma:twice}, there exists a
tiling where each tile $t$ that appears in $f$ appears in at least two
different rows (as it appears twice on some column{\ldots}). If no tile of $T$ appear in $f$ we are done.
Otherwise, notice that each row where a tile from $T$ appears must be
tiled periodically, of period $|T|$.
Now, there exist two rows with a tile from $T$, say $0$ and $p > 0$.
We can now obtain a new tiling $f'$ which is periodic vertically
by $f'(i,j) = f(i,j \mathop{\mathrm{mod}} p)$. But this 
implies that there exists a tiling which is periodic.
\end{proof}	

We now are ready to prove the theorem.

\begin{theorem}\label{thm:wangtobar}
	Tileability of tilesets with parameter at most $k$ is reducible
	to tileability with at most $2k$ bars.
	More precisely, using tileability with at most $2k$ bars as on
	oracle, we can design an algorithm that, given a tileset $W$
	with parameter at most $k$, decides whether $W$ tiles the plane,
	asking at most one question to the oracle.
\end{theorem}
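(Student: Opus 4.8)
The plan is to view $W$ as its $1$-labeled graph $G$ (with the largest colour class on the west side, so $G$ has $n$ edges and $c$ vertices of positive outdegree, $n-c\le k$) and to shrink it by moves that preserve tileability either exactly or up to ``or there is a periodic tiling''. First I would apply Proposition~\ref{prop:etapeun} as long as it applies: delete any edge $u\to v$ such that no path leads from $v$ back to $u$. The resulting tileset $W'$ has a tiling iff $W$ does, its parameter $n-c$ has not grown (deleting an edge lowers the edge count by one and the number of positive-outdegree vertices by zero or one), and now every vertex has positive in- and outdegree. Consequently every connected component is strongly connected, and there is \emph{no} edge between two distinct strongly connected components (such an edge $u\to v$ would, by definition of the components, have no return path $v\to u$ and would already have been deleted). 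Thus $W'$ is a disjoint union of strongly connected graphs $H_1,\dots,H_\ell$, and $\sum_i (n_{H_i}-c_{H_i}) = n(W')-c(W') \le k$.

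Call $H$ \emph{thin} if all its vertices have outdegree exactly $1$ --- then $H$ is a simple cycle and $n_H=c_H$ --- and \emph{thick} otherwise; a thick component has $n_H-c_H\ge 1$, so there are at most $k$ of them and the excesses $n_H-c_H$ of the thick components sum to at most $k$. Since $W'$ has no edge between components, I can peel off the thin components one at a time by Proposition~\ref{prop:etapedeux}: if $\tau_0=W'\supseteq\tau_1\supseteq\cdots\supseteq\tau_r=W^{*}$ are the successive tilesets, $W^{*}$ being the union of the thick components, each step yields ``$\tau_j$ tiles iff $\tau_{j+1}$ tiles or $\tau_j$ admits a periodic tiling''. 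As every $\tau_j$ is contained in $W$, a periodic tiling of a $\tau_j$ is a periodic tiling of $W$; unrolling, one obtains the clean equivalence: $W$ tiles the plane $\iff$ $W^{*}$ tiles the plane, or $W$ admits a periodic tiling.

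It remains to turn $W^{*}$ into a barset with at most $2k$ bars. I would treat each thick component $F$ separately, by the following \emph{absorption} move on its (now $\omega$-)labeled graph: if a vertex $v$ has a unique incoming edge $u\to v$, then for every outgoing edge $v\to w$ create one edge $u\to w$ whose two label words are the concatenations of those of $u\to v$ and $v\to w$, and delete $v$; symmetrically if $v$ has a unique outgoing edge. Each move deletes one vertex and one edge, hence preserves $n_F-c_F$, and it leaves the set of biinfinite paths --- hence the set of tilings --- unchanged. Iterating until no vertex of in- or outdegree $1$ survives yields a barset $B_F$ (each edge is a Wang bar, its north/south words being the accumulated labels) in which every vertex has outdegree $\ge 2$; hence the number of bars is at least twice the number of vertices, so the number of vertices is at most $n_F-c_F$ and the number of bars is at most $2(n_F-c_F)$. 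Setting $B=\bigsqcup_F B_F$ gives a barset with at most $2\sum_F(n_F-c_F)\le 2k$ bars for which $W^{*}$ tiles the plane iff $B$ tiles the plane.

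The algorithm is then: compute $B$, and ask the oracle \emph{once} whether $B$ tiles the plane. If yes, answer ``$W$ tiles'' (since $W^{*}\subseteq W$). If no, then $W^{*}$ does not tile, so by the equivalence above $W$ tiles iff $W$ admits a periodic tiling; equivalently, $W$ does \emph{not} tile iff it has no periodic tiling iff some finite square $[0,N]^2$ is untileable by $W$. So run in parallel the brute-force enumeration of periodic tilings of $W$ and the brute-force search for such an $N$: both are semi-decision procedures, and in this branch exactly one of them halts, which settles the question --- with the oracle consulted exactly once. I expect the two delicate points to be (i) pinning down the constant, i.e.\ verifying that the absorption move together with the outdegree-$\ge 2$ count really produce $2(n_F-c_F)$ bars, and checking carefully that absorption does not alter tileability; and (ii) arranging the ``or there is a periodic tiling'' alternative of Proposition~\ref{prop:etapedeux} so that the overall procedure terminates using a single oracle call --- which is exactly what forces the parallel semi-decision in the ``no'' branch.
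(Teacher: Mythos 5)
Your proposal is correct and follows essentially the same route as the paper: eliminate edges with no return path (Proposition~\ref{prop:etapeun}), discard the pure-cycle components via Proposition~\ref{prop:etapedeux} at the cost of an ``or $W$ tiles periodically'' alternative, contract degree-one vertices to turn the remaining strongly connected components into an $\omega$-labeled graph with at most $2k$ edges, and resolve the negative oracle answer by running the periodic-tiling search and the compactness-based non-tileability search in parallel. The only differences (per-component bookkeeping, and allowing contraction along indegree-one as well as outdegree-one vertices) are cosmetic.
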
	
Note: The reduction we have is \textbf{not} a many-one reduction. It
is usually called a weak-truth-table reduction.

\begin{figure}[htbp]
\begin{tabular}{|c|c|}
\hline	
	(I) & 
\begin{tikzpicture}[scale=0.3]
	\begin{scope}[xshift=0 cm,yshift=0 cm]
\wang 3 1 2 3
\end{scope}
\begin{scope}[xshift=0 cm,yshift=6 cm]
\wang 1 3 2 2
\end{scope}
\begin{scope}[xshift=6 cm,yshift=0 cm]
\wang 5 3 2 3
\end{scope}
\begin{scope}[xshift=6 cm,yshift=6 cm]
\wang 6 2 1 2
\end{scope}
\begin{scope}[xshift=12 cm,yshift=0 cm]
\wang 4 2 1 3
\end{scope}
\begin{scope}[xshift=12 cm,yshift=6 cm]
\wang 4 6 1 2
\end{scope}
\begin{scope}[xshift=18 cm,yshift=0 cm]
\wang 2 4 4 1
\end{scope}
\begin{scope}[xshift=18 cm,yshift=6 cm]
\wang 4 7 3 2
\end{scope}
\begin{scope}[xshift=24 cm,yshift=0 cm]
\wang 8 7 2 1
\end{scope}
\begin{scope}[xshift=24 cm,yshift=6 cm]
\wang 9 8 2 4
\end{scope}
\begin{scope}[xshift=30 cm,yshift=0 cm]
\wang 9 8 4 1
\end{scope}
\begin{scope}[xshift=30 cm,yshift=6 cm]
\wang 7 9 3 2
\end{scope}
\end{tikzpicture}
\\
\hline
(1) & 
\begin{tikzpicture}[->, node distance=2.5cm]
\node[state] (A) {1};
\node[state] (C) [right of=A] {3};
\node[state] (B) [right of=C] {2};
\node[state] (D) [right of=B] {4};
\node[state] (E) [below of=A] {5};
\node[state] (G) [right of=D] {7};
\node[state] (F) [below of=B] {6};
\node[state] (H) [below of=G] {8};
\node[state] (I) [right of=H] {9};
\path (C) edge[bend left] node[below] {$(2,3)$} (A);
\path (A) edge[bend left] node[above] {$(2,2)$} (C);
\path (E) edge node[below right] {$(2,3)$} (C);
\path (F) edge node[left] {$(1,2)$} (B);
\path (D) edge[bend left] node[below] {$(1,3)$} (B);
\path (D) edge node[right] {$(1,2)$} (F);
\path (B) edge[bend left] node[above] {$(4,1)$} (D);
\path (D) edge node[above] {$(3,2)$} (G);
\path (H) edge node[left] {$(2,1)$} (G);
\path (I) edge[bend left] node[below] {$(2,4)$} (H);
\path (I) edge[bend right] node[above] {$(4,1)$} (H);
\path (G) edge node[above right] {$(3,2)$} (I);
\end{tikzpicture}
\\
\hline

(2) & %\input exampleg2
\begin{tikzpicture}[->, node distance=2.5cm]
\node[state] (A) {1};
\node[state] (C) [right of=A] {3};
\node[state] (B) [right of=C] {2};
\node[state] (D) [right of=B] {4};
\node[state] (G) [right of=D] {7};
\node[state] (F) [below of=B] {6};
\node[state] (H) [below of=G] {8};
\node[state] (I) [right of=H] {9};
\path (C) edge[bend left] node[below] {$(2,3)$} (A);
\path (A) edge[bend left] node[above] {$(2,2)$} (C);
\path (F) edge node[left] {$(1,2)$} (B);
\path (D) edge[bend left] node[below] {$(1,3)$} (B);
\path (D) edge node[right] {$(1,2)$} (F);
\path (B) edge[bend left] node[above] {$(4,1)$} (D);
\path (H) edge node[left] {$(2,1)$} (G);
\path (I) edge[bend left] node[below] {$(2,4)$} (H);
\path (I) edge[bend right] node[above] {$(4,1)$} (H);
\path (G) edge node[above right] {$(3,2)$} (I);
\end{tikzpicture}
\\
\hline
(3) &  %\input exampleg3
\begin{tikzpicture}[->, node distance=2.5cm]
\node[state] (B)  {2};
\node[state] (D) [right of=B] {4};
\node[state] (G) [right of=D] {7};
\node[state] (F) [below of=B] {6};
\node[state] (H) [below of=G] {8};
\node[state] (I) [right of=H] {9};
\path (F) edge node[left] {$(1,2)$} (B);
\path (D) edge[bend left] node[below] {$(1,3)$} (B);
\path (D) edge node[right] {$(1,2)$} (F);
\path (B) edge[bend left] node[above] {$(4,1)$} (D);
\path (H) edge node[left] {$(2,1)$} (G);
\path (I) edge[bend left] node[below] {$(2,4)$} (H);
\path (I) edge[bend right] node[above] {$(4,1)$} (H);
\path (G) edge node[above right] {$(3,2)$} (I);
\end{tikzpicture}
\\
\hline
\end{tabular}
\caption{The two first steps in the algorithm of Theorem
  \ref{thm:wangtobar}. $(1)$
  corresponds to the representation of the tileset $(I)$. $(2)$
  deletes edges $(u,v)$ from $(1)$ if no path from $v$ to $u$ exists.
  $(3)$ deletes equivalences classes (connected components) from $(2)$
  which are cycles. The transform from $(2)$ to $(3$) does \textbf{not} preserve tileability.}
\label{fig:algoun}
\end{figure}
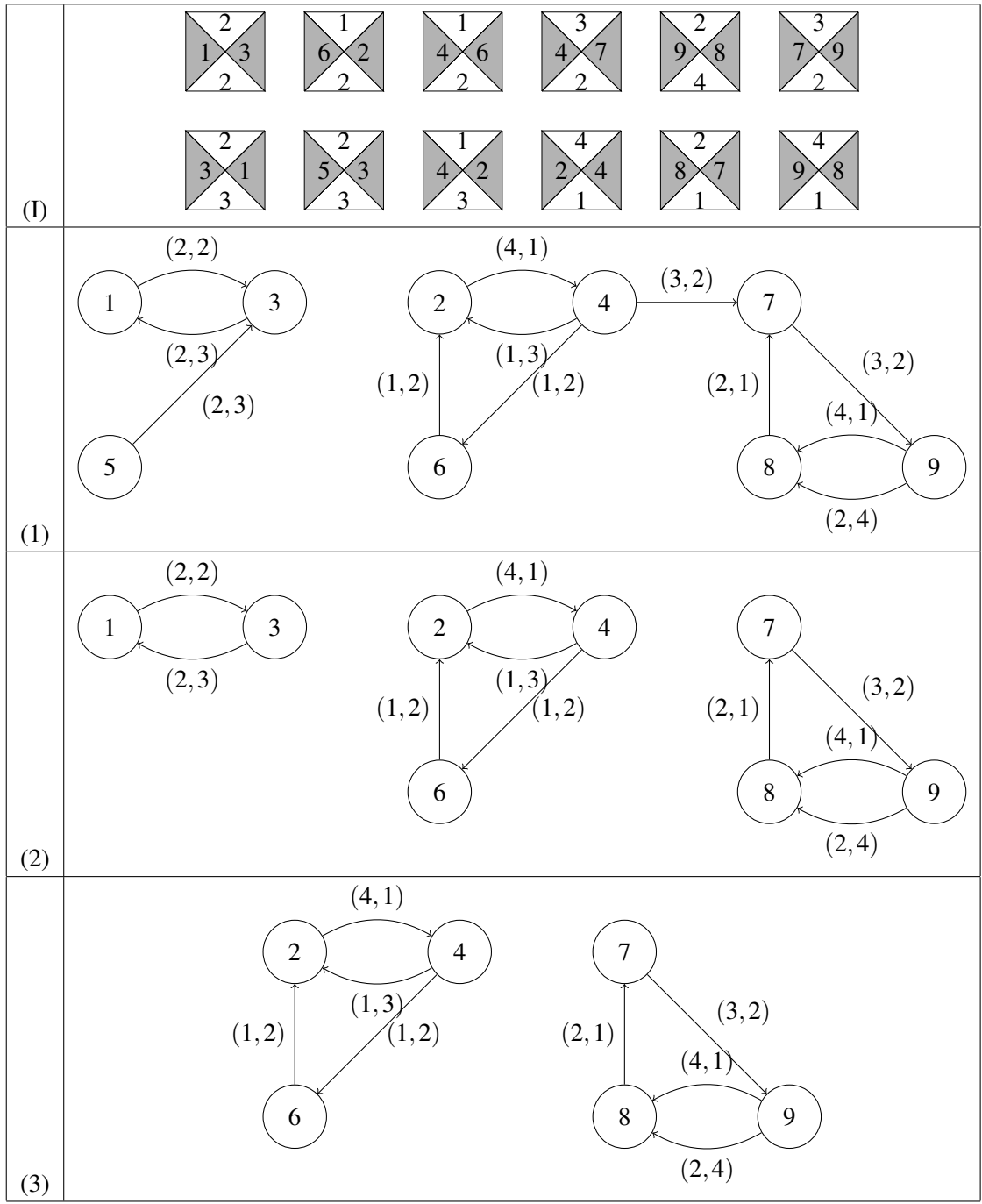

\begin{figure}[htbp]	
\begin{tabular}{|c|p{14cm}|}
\hline	
	(3)
& %\input exampleg3 
\begin{tikzpicture}[->, node distance=2.5cm]
\node[state] (B)  {2};
\node[state] (D) [right of=B] {4};
\node[state] (G) [right of=D] {7};
\node[state] (F) [below of=B] {6};
\node[state] (H) [below of=G] {8};
\node[state] (I) [right of=H] {9};
\path (F) edge node[left] {$(1,2)$} (B);
\path (D) edge[bend left] node[below] {$(1,3)$} (B);
\path (D) edge node[right] {$(1,2)$} (F);
\path (B) edge[bend left] node[above] {$(4,1)$} (D);
\path (H) edge node[left] {$(2,1)$} (G);
\path (I) edge[bend left] node[below] {$(2,4)$} (H);
\path (I) edge[bend right] node[above] {$(4,1)$} (H);
\path (G) edge node[above right] {$(3,2)$} (I);
\end{tikzpicture}
\\
\hline (4) &
\begin{tikzpicture}[->, node distance=2.5cm]
\node[state] (B)  {2};
\node[state] (D) [right of=B] {4};
\node[state] (G) [right of=D] {7};
\node[state] (I) [below right of=G] {9};
\path (D) edge[bend left] node[below] {$(1,3)$} (B);
\path (D) edge node[above] {$(11,22)$} (B);
\path (B) edge[bend left] node[above] {$(4,1)$} (D);
\path (I) edge[bend left] node[below left] {$(22,41)$} (G);
\path (I) edge[bend right] node[above right] {$(42,11)$} (G);
\path (G) edge node {$(3,2)$} (I);
\end{tikzpicture}
\\
\hline (5) & 
\begin{tikzpicture}[->, node distance=2.5cm]
\node[state] (D) [right of=B] {4};
\node[state] (I) [below right of=D] {9};
\path (D) edge[loop above] node[above] {$(114,221)$} (D);
\path (D) edge[loop below] node[below] {$(14,31)$} (D);
\path (I) edge[loop above] node[above] {$(223,412)$} (I);
\path (I) edge[loop below] node[below] {$(423,112)$} (I);
\end{tikzpicture}
\\
\hline (O) & 
\begin{tikzpicture}[scale=0.3]
\draw(0,0) rectangle (12,4);
\draw[line width=0.1pt, fill=black!30] (0,0) -- +(2,2) -- +(0,4) -- +(0,0);
\draw[line width=0.1pt, fill=black!30] (12,0) -- +(-2,2) -- +(0,4) -- +(0,0);
\draw (1,2) node {4};
\draw (11,2) node {4};
\draw (2,1) node {2};
\draw (6,1) node {2};
\draw (10,1) node {1};
\draw (2,3) node {1};
\draw (6,3) node {1};
\draw (10,3) node {4};
\begin{scope}[xshift=16cm]
\draw(0,0) rectangle (8,4);
\draw[line width=0.1pt, fill=black!30] (0,0) -- +(2,2) -- +(0,4) -- +(0,0);
\draw[line width=0.1pt, fill=black!30] (8,0) -- +(-2,2) -- +(0,4) -- +(0,0);
\draw (1,2) node {4};
\draw (7,2) node {4};
\draw (2,1) node {3};
\draw (6,1) node {1};
\draw (2,3) node {1};
\draw (6,3) node {4};
\end{scope}
\begin{scope}[yshift=-6cm]
\draw(0,0) rectangle (12,4);
\draw[line width=0.1pt, fill=black!30] (0,0) -- +(2,2) -- +(0,4) -- +(0,0);
\draw[line width=0.1pt, fill=black!30] (12,0) -- +(-2,2) -- +(0,4) -- +(0,0);
\draw (1,2) node {9};
\draw (11,2) node {9};
\draw (2,1) node {4};
\draw (6,1) node {1};
\draw (10,1) node {2};
\draw (2,3) node {2};
\draw (6,3) node {2};
\draw (10,3) node {3};
\end{scope}
\begin{scope}[yshift=-6cm,xshift=16cm]
\draw(0,0) rectangle (12,4);
\draw[line width=0.1pt, fill=black!30] (0,0) -- +(2,2) -- +(0,4) -- +(0,0);
\draw[line width=0.1pt, fill=black!30] (12,0) -- +(-2,2) -- +(0,4) -- +(0,0);
\draw (1,2) node {9};
\draw (11,2) node {9};
\draw (2,1) node {1};
\draw (6,1) node {1};
\draw (10,1) node {2};
\draw (2,3) node {4};
\draw (6,3) node {2};
\draw (10,3) node {3};
\end{scope}
\end{tikzpicture}
\\
\hline
\end{tabular}
\caption{The last steps in the algorithm of Theorem \ref{thm:wangtobar}. We
  basically contract vertices of outdegree one until there are no more
  such vertices. We contract vertices $6$ and $8$ from $(3)$ to $(4)$
  and $(2)$ and $(7)$ from $(4)$ to $(5)$. We then obtain the 4 Wang
  bars in $(O)$, from a initial parameter of $12-9 = 3$.}
\label{fig:algodeux}
\end{figure}
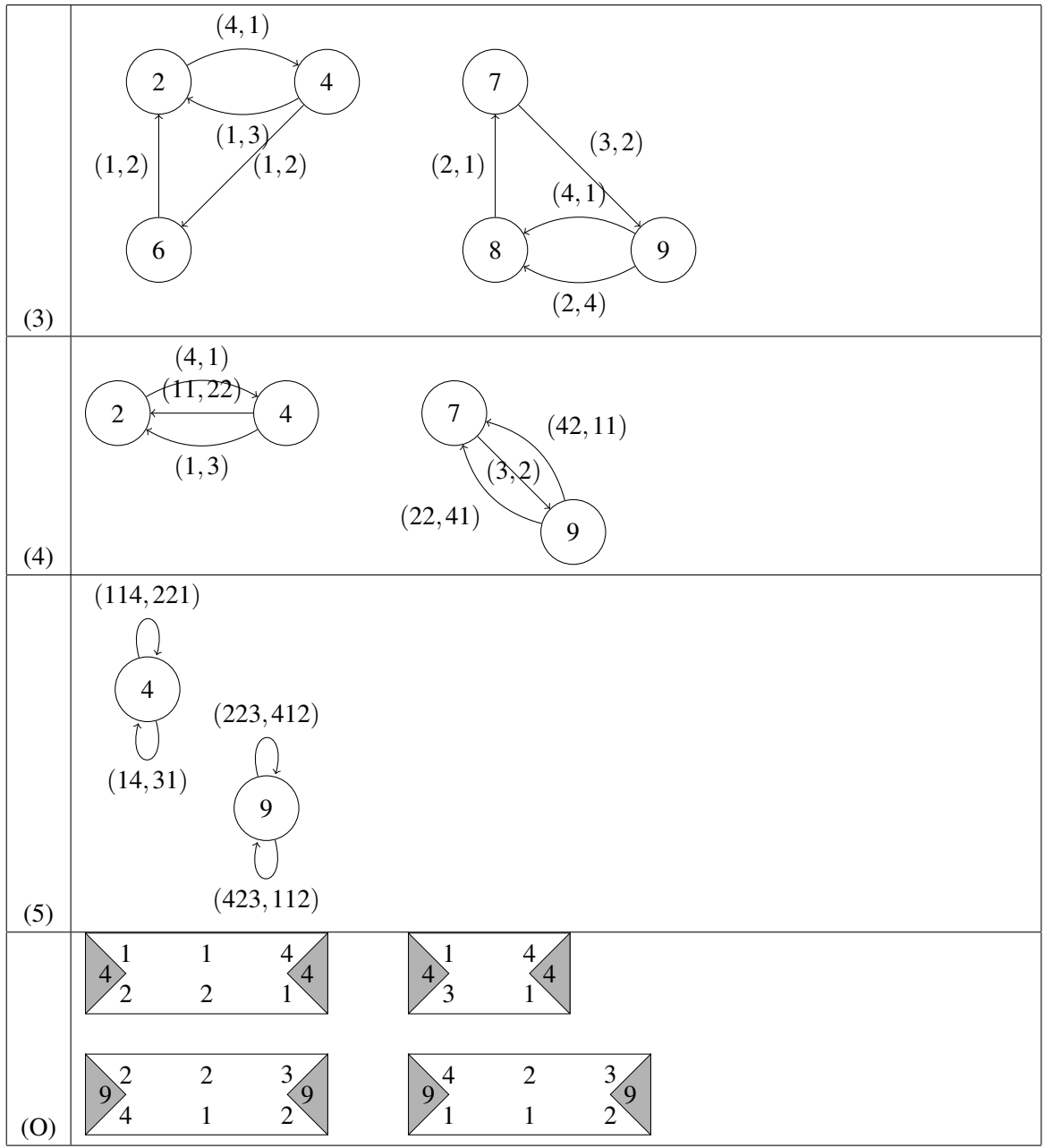
\begin{proof}
	We start from a tileset $W$ with parameter at most  $k$. If we
	rotate the tileset, we can suppose w.l.o.g that the maximum number
	of colors is reached on the west side.
	An example of the whole algorithm is provided on
	Fig.\ref{fig:algoun} and Fig.\ref{fig:algodeux}.

    Using Proposition\ref{prop:etapeun} and Proposition \ref{prop:etapedeux}, we obtain a
	tileset $V \subseteq W$ so that:
	\begin{itemize}
		\item Either $W$ tiles the plane periodically
		\item Or $V$ tiles the plane
		\item Or $W$ does not tile the plane
	\end{itemize}

By definition of the parameter, it is clear that the parameter of $V$
is less or equal to the parameter of $W$.
We now look at $V$ as a graph $G$.
Let $e(G)$ be the number of edges of $G$ and $v(G)$ its number of
vertices. By definition of $G$, $e(G) - v(G) \leq k$.
We now look at each equivalence class (strongly connected component) 
for $\sim_G$. 
We now execute the following algorithm on $G$, changing it into 
a $\omega$-labeled graph:

\begin{itemize}
\item If the number of edges in each equivalence class is at least
  twice the number of vertices, then we are done.
\item Otherwise let $C$ be such an equivalence class, and $e(C)$ (resp. 
$v(C)$) its number of edges (resp. vertices). We now have $e(C) < 2v(C)$.
\item This implies that $v(C) > 1$, otherwise all vertices of $C$ 
  would have outdegree $1$, and such
equivalences classes have been eliminated.
\item By a simple counting argument, there exists a vertex of outdegree
less than $2$ (the sum of all outdegrees
is equal to $e(C)$), that is of outdegree $1$.
\item Let $u$ be this vertex. The edge leaving $u$ cannot be from $u$ to $u$:
  There exists some other vertex $v$ in $C$, and there is a path from
  $u$ to $v$, which would be impossible if the only vertex leaving $u$ came back to $u$.
\clearpage
\item The edge from $u$ is then going to some $v \not= u$. Let $(x,y)$
  be its label. We now change the graph in the following way:
  \begin{itemize}
	  \item We delete the edge leaving from $u$.
	  \item For each edge from $w$ to $u$ labeled by $(u_1, u_2)$,
		we delete this edge and add a new edge from $w$ to $v$,
		labeled with $(u_1 x, u_2y)$. It is clear that this does not
		change tileability by the $\omega$-labeled graph.
	  \item The new graph has exactly one less vertex and one
		less edge than the previous one.
	  \item Doing this does not change the property of the graph: It
		is still true that a path from $u$ to $v$ implies a path from
		$v$ to $u$, and that there exist no equivalence class where
		every vertex is of outdegree one.
  \end{itemize}
  \item We repeat until the number of edges in each equivalence class
	is at least twice the number of vertices.
\end{itemize}
Now the $\omega$-labeled graph $G'$ we obtain has $n$ edges and $p$
vertices, with $n \geq 2p$. Furthermore, as we always delete a vertex
and a edge at the same time, $n-p \leq k$. We deduce that $n = 2n - 2p + 2p - n \leq 2k + 0 \leq 2k$.
So $G'$ represents a set of at most $2k$ bars. Let $B$ be this set.

Now the barset $B$ has the following property:
\begin{itemize}
	\item Either $W$ tiles periodically
	\item Or $B$ tiles the plane (in which case $W$ tiles the plane)
	\item Or $W$ does not tile the plane
\end{itemize}
So we first ask the oracle whether $B$ tiles the plane. If it does,
then $W$ tiles the plane. Otherwise we know that either $W$ tiles the
plane periodically, or it does not tile the plane. We can test for the
two concurrently until one of them halts.
\end{proof}

To recap, if we are able to decide if a tileset of parameter $k$ tiles
the plane, we can decide whether a barset of $k+1$ Wang bars tiles
the plane. Conversely, if we are able to decide if a barset of $k$
Wang bars tiles the plane, we can decide if a tileset of parameter
$k/2$ tiles the plane.
\clearpage
\section{Tileability for 44 Wang Bars is undecidable}

We will now prove that there is no algorithm that decides whether a
set of 44 Wang Bars tiles the plane.
This implies that there exists no algorithm that decide whether a
tileset of parameter at most 43 tiles the plane.

We will do a reduction from the Domino Problem. That is we will
explain how to encode any tileset $\tau$ into a barset $B_\tau$ of 44 Wang
bars (and thus into a tileset of parameter at most 43) so that $\tau$ tiles the plane iff $B$ tiles the plane.

The reduction we use here is heavily inspired by the transformation by
Ollinger \cite{OllingerPolyo} from any tileset to a set of 11 polyominoes. The main
difference is that we need here to make the polyominoes ``flat''.

Let $\tau$ be a set of $n$ Wang tiles. To define $B_\tau$ easier, all bars
of $B_\tau$ will have the property that
the color on the east and west side will be the same for all bars, that is
the only constraint to put two bars together are vertical constraints.
This means that we can see a bar $b$ as a pair of words of the same length.

Now we give the transformation. Let $n$ be the number of tiles in $\tau$ and
$C$ an upper bound on the number of colors, and suppose the colors
are numbered from $1$ to $C$.

The barset will first contain 16 bars, that do not depend on the
tileset $\tau$. These bars are depicted in Fig.\ref{fig:barun}. The colors
${a,b,c,d,e,f,g}$ that appear on these bars appear only in these bars,
so the only way to arrange the bars in a tiling is as depicted on the figure. We
will omit the labels ${a,b,c,d,e,f,g}$ on subsequent pictures.

\newcommand \barre{
\fill[fill=green!30,very thick] (0,0.5) rectangle ++(14,0.5);
\fill[fill=red!30,very thick] (0,0) rectangle ++(15,0.5);
\fill[fill=green!30,very thick] (17,0.5) rectangle ++(14,0.5);
\fill[fill=red!30,very thick] (16,0) rectangle ++(15,0.5);
\fill[fill=green,very thick] (0,4) rectangle ++(14,0.5);
\fill[fill=green,very thick] (17,4) rectangle ++(14,0.5);
\fill[very thick,fill=blue] (15,-3) rectangle ++(1,0.5);
\fill[very thick,fill=blue] (15,7.5) rectangle ++(1,0.5);
\draw[very thick] (0,0) -- ++(15,0) -- ++(0,-3) -- ++ (1,0) --
++ (0,3) -- ++ (15,0) -- ++ (0,1) -- ++ (-14,0)  -- ++(0,1) -- ++ (-1, 0) -- ++ (0,1) -- ++ (1,0) -- ++ (0,1) -- ++ (14,0)
-- ++ (0,1) -- ++ (-15,0) -- ++ (0,3) -- ++ (-1,0) -- ++ (0,-3) -- ++
(-15,0) -- ++ (0,-1) -- ++ (14,0) -- ++ (0,-1) -- ++ (1, 0) -- ++ (0,-1) -- ++ (-1,0) -- ++ (0,-1)  -- ++ (-14,0) -- cycle;
}
\newcommand \filler{
\fill[fill=green] (1,1.5) rectangle ++ (7, .5);
\fill[fill=green!30] (1,-1) rectangle ++ (7, .5);
\draw[very thick] (0,0) -- ++ (1,0) -- ++ (0,-1) -- ++ (7,0) -- ++ (0,1) -- ++ (-1,0) -- ++ (0,1) -- ++ (1,0) -- ++(0,1) -- ++ (-7, 0) -- ++ (0,-1) -- ++ (-1,0) -- cycle;
}
\newcommand \fillersym{
\fill[fill=green] (-1,1.5) rectangle ++ (-7, .5);
\fill[fill=green!30] (-1,-1) rectangle ++ (-7, .5);
\draw[very thick] (0,0) -- ++ (-1,0) -- ++ (0,-1) -- ++ (-7,0) -- ++ (0,1) -- ++ (1,0) -- ++ (0,1) -- ++ (-1,0) -- ++(0,1) -- ++ (7, 0) -- ++ (0,-1) -- ++ (1,0) -- cycle;
}
\newcommand \filla{
\fill[black!30] (0,0) rectangle ++ (1,.5);
\draw[very thick] (0,0) rectangle ++ (1, 2);
}

\newcommand \fillb{
\fill[fill=black!30] (0,0) rectangle ++ (1,.5);
\fill[fill=red!30] (0,2.5) rectangle ++ (1,.5);
\draw[very thick] (0,0) rectangle ++ (1, 3);}

\newcommand \fillc{
\fill[fill=red!30] (0,0) rectangle ++ (1,.5);
\fill[fill=black] (0,1.5) rectangle ++ (1,.5);
\draw[very thick] (0,0) rectangle ++ (1, 2);}

\newcommand \filld{
\fill[fill=black] (0,2.5) rectangle ++ (1,.5);
\draw[very thick] (0,0) rectangle ++ (1, 3);}
\newcommand\content{
      \draw[very thick] (0,0) rectangle ++(23,1);
      \draw[very thick,fill=black] (1,0) rectangle ++(21,.5);
      \draw[very thick,fill=black!30] (1,0.5) rectangle ++(21,.5);
\draw[fill=blue] (4,0) rectangle ++ (1,1);
\draw[fill=blue] (11,0) rectangle ++ (1,1);
\draw[fill=blue] (18,0) rectangle ++ (1,1);
\draw[fill=yellow] (1, 0.5) rectangle ++ (1,.5);
\draw[fill=yellow] (7, 0.5) rectangle ++ (1,.5);
\draw[fill=yellow] (6, 0) rectangle ++ (1,.5);
\draw[fill=yellow] (2, 0) rectangle ++ (1,.5);
\draw[fill=yellow] (8, 0.5) rectangle ++ (1,.5);
\draw[fill=yellow] (13, 0.5) rectangle ++ (1,.5);
\draw[fill=yellow] (13, 0) rectangle ++ (1,.5);
\draw[fill=yellow] (10, 0) rectangle ++ (1,.5);
\draw[fill=yellow] (16, 0.5) rectangle ++ (1,.5);
\draw[fill=yellow] (19, 0.5) rectangle ++ (1,.5);
\draw[fill=yellow] (19, 0) rectangle ++ (1,.5);
\draw[fill=yellow] (15, 0) rectangle ++ (1,.5);
}
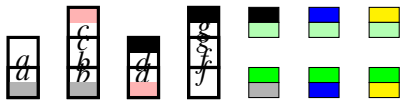
\begin{figure}[htbp]
\begin{center}
	\begin{tikzpicture}[scale=0.4]
\draw[very thick] (0,0) rectangle ++(1,1);
\draw  (.5,.8) node {$a$};
\draw  (.5,1.2) node {$a$};
\filla
\begin{scope}[xshift=2cm]
\draw[very thick] (0,0) rectangle ++(1,1);
\draw[very thick] (0,1) rectangle ++(1,1);
\draw  (.5,.8) node {$b$};
\draw  (.5,1.2) node {$b$};
\draw  (.5,1.8) node {$c$};
\draw  (.5,2.2) node {$c$};
\fillb
\end{scope}
\begin{scope}[xshift=4cm]
\draw[very thick] (0,0) rectangle ++(1,1);
\draw  (.5,.8) node {$d$};
\draw  (.5,1.2) node {$d$};
\fillc
\end{scope}
\begin{scope}[xshift=6cm]
\draw[very thick] (0,0) rectangle ++(1,1);
\draw[very thick] (0,1) rectangle ++(1,1);
\draw  (.5,.8) node {$f$};
\draw  (.5,1.2) node {$f$};
\draw  (.5,1.8) node {$g$};
\draw  (.5,2.2) node {$g$};
\filld
\end{scope}
	\begin{scope}[xshift= 7cm] 
\draw[fill=green] (1,0.5) rectangle ++ (1,.5);
\draw[fill=black!30] (1,0) rectangle ++ (1,.5);

\draw[fill=green] (3,0.5) rectangle ++ (1,.5);
\draw[fill=blue] (3,0) rectangle ++ (1,.5);

\draw[fill=green] (5,0.5) rectangle ++ (1,.5);
\draw[fill=yellow] (5,0) rectangle ++ (1,.5);

\draw[fill=black] (1,2.5) rectangle ++ (1,.5);
\draw[fill=green!30] (1,2) rectangle ++ (1,.5);

\draw[fill=blue] (3,2.5) rectangle ++ (1,.5);
\draw[fill=green!30] (3,2) rectangle ++ (1,.5);

\draw[fill=yellow] (5,2.5) rectangle ++ (1,.5);
\draw[fill=green!30] (5,2) rectangle ++ (1,.5);
\end{scope}
\end{tikzpicture}
\end{center}
\caption{The first $16$ bars}
\label{fig:barun}
\end{figure}
The next tiles are almost the same for any tileset, and they just
depend on the number of tiles $n$, and the number of colors $C$. They
are depicted in Fig. \ref{fig:bardeux}. As before, lowercase letters and greek
letters appear in only two tiles so that the only way to arrage the
bars is as depicted. We will not see them as $27$ bars, but as $5$
big blocks. The first one will be called the \emph{box}, the two next ones
\emph{fillers}, and the last ones \emph{handles}.
\newcommand\handle{
%  \fill[fill=white] (0,2) rectangle ++(-14,.5);
  \fill[fill=red!30] (1,2.5) rectangle ++(-15,.5);
  \fill[fill=yellow] (0,0) rectangle ++(1,.5);
  \fill[fill=yellow] (-15,4.5) rectangle ++(1,.5);
  \draw [very thick] (0,0) -- ++ (1,0) -- ++ (0,3) -- ++ (-15,0)
  -- ++ (0,2) -- ++ (-1,0) -- ++ (0,-3) -- ++ (15,0) -- ++ (0,-2);
  }

\newcommand\handlesym{
%  \draw (0,2) rectangle ++(15,.5);
  \fill[fill=red!30] (-1,2.5) rectangle ++(15,.5);
  \fill[fill=yellow] (0,0) rectangle ++(-1,.5);
  \fill[fill=yellow] (15,4.5) rectangle ++(-1,.5);
  \draw [very thick] (0,0) -- ++ (-1,0) -- ++ (0,3) -- ++ (15,0)
  -- ++ (0,2) -- ++ (1,0) -- ++ (0,-3) -- ++ (-15,0) -- ++ (0,-2);
  }

\begin{figure}[htbp]
\begin{tikzpicture}[scale=0.4]
	\draw (20,-2) node {The \textbf{box}};
	\barre
\draw[very thick] (15,-3) rectangle ++ (1,1);
\draw[very thick] (15,-2) rectangle ++ (1,1);
\draw[very thick] (15,-1) rectangle ++ (1,1);
\draw[very thick] (14,1) rectangle ++ (3,1);
\draw[very thick] (14,3) rectangle ++ (3,1);
\draw[decoration={brace, raise=0.5cm},decorate] (0,4) -- ++ (14,0);
\draw (7,6) node {$(2C+1)(n-1) $};
\draw (15.5,-2.2) node {$h$};
\draw (15.5,-1.8) node {$h$};
\draw (15.5,-1.2) node {$i$};
\draw (15.5,-0.8) node {$i$};
\draw (15.5,-0.2) node {$j$};
\draw (15.5,0.2) node {$j$};
\draw (14.5,0.8) node {$k$};
\draw (14.5,1.2) node {$k$};
\draw (15.5,0.8) node {$k$};
\draw (15.5,1.2) node {$k$};
\draw (16.5,0.8) node {$k$};
\draw (16.5,1.2) node {$k$};
\draw (15.5,1.8) node {$l$};
\draw (15.5,2.2) node {$l$};
\draw (15.5,2.8) node {$m$};
\draw (15.5,3.2) node {$m$};
\draw (14.5,3.8) node {$n$};
\draw (14.5,4.2) node {$n$};
\draw (15.5,3.8) node {$n$};
\draw (15.5,4.2) node {$n$};
\draw (16.5,3.8) node {$n$};
\draw (16.5,4.2) node {$n$};
\draw[very thick] (15,5) rectangle ++ (1,1);
\draw[very thick] (15,6) rectangle ++ (1,1);
\draw (15.5,4.8) node {$o$};
\draw (15.5,5.2) node {$o$};
\draw (15.5,5.8) node {$p$};
\draw (15.5,6.2) node {$p$};
\draw (15.5,6.8) node {$q$};
\draw (15.5,7.2) node {$q$};

\begin{scope}[yshift=-10cm]
	% 2k +1
	\filler
	\draw (24,0) node {The \textbf{fillers}};
\draw[very thick] (0,0) rectangle ++ (7,1);
\draw[decoration={brace, raise=0.5cm},decorate] (1,1) -- ++ (7,0);
\draw (5,3) node {$2C+1$};
\draw (1.5,-0.2) node {$r$};
\draw (2.5,-0.2) node {$r$};
\draw (3.5,-0.2) node {$r$};
\draw (4.5,-0.2) node {$r$};
\draw (5.5,-0.2) node {$r$};
\draw (6.5,-0.2) node {$r$};
\draw (1.5,0.2) node {$r$};
\draw (2.5,0.2) node {$r$};
\draw (3.5,0.2) node {$r$};
\draw (4.5,0.2) node {$r$};
\draw (5.5,0.2) node {$r$};
\draw (6.5,0.2) node {$r$};
\draw (1.5,0.8) node {$s$};
\draw (2.5,0.8) node {$s$};
\draw (3.5,0.8) node {$s$};
\draw (4.5,0.8) node {$s$};
\draw (5.5,0.8) node {$s$};
\draw (6.5,0.8) node {$s$};
\draw (1.5,1.2) node {$s$};
\draw (2.5,1.2) node {$s$};
\draw (3.5,1.2) node {$s$};
\draw (4.5,1.2) node {$s$};
\draw (5.5,1.2) node {$s$};
\draw (6.5,1.2) node {$s$};
\end{scope}
\begin{scope}[yshift=-10cm,xshift=20cm]
	\fillersym
\draw[very thick] (-7,0) rectangle ++ (7,1);
\draw[decoration={brace, raise=0.5cm},decorate] (-8,1) -- ++ (7,0);
\draw (-5,3) node {$2C+1$};
\draw (-1.5,-0.2) node {$t$};
\draw (-2.5,-0.2) node {$t$};
\draw (-3.5,-0.2) node {$t$};
\draw (-4.5,-0.2) node {$t$};
\draw (-5.5,-0.2) node {$t$};
\draw (-6.5,-0.2) node {$t$};
\draw (-1.5,0.2) node {$t$};
\draw (-2.5,0.2) node {$t$};
\draw (-3.5,0.2) node {$t$};
\draw (-4.5,0.2) node {$t$};
\draw (-5.5,0.2) node {$t$};
\draw (-6.5,0.2) node {$t$};
\draw (-1.5,0.8) node {$u$};
\draw (-2.5,0.8) node {$u$};
\draw (-3.5,0.8) node {$u$};
\draw (-4.5,0.8) node {$u$};
\draw (-5.5,0.8) node {$u$};
\draw (-6.5,0.8) node {$u$};
\draw (-1.5,1.2) node {$u$};
\draw (-2.5,1.2) node {$u$};
\draw (-3.5,1.2) node {$u$};
\draw (-4.5,1.2) node {$u$};
\draw (-5.5,1.2) node {$u$};
\draw (-6.5,1.2) node {$u$};
\end{scope}
\begin{scope}[yshift=-20cm, xshift=15cm]
	\draw[very thick] (0,0) rectangle ++ (1,1);
	\draw[very thick] (0,1) rectangle ++ (1,1);
	\draw (0.5,0.8) node {$v$};
	\draw (0.5,1.2) node {$v$};
	\draw (0.5,1.8) node {$w$};
	\draw (0.5,2.2) node {$w$};
	\draw[very thick] (-15,3) rectangle ++ (1,1);
	\draw (-14.5,2.8) node {$x$};
	\draw (-14.5,3.2) node {$x$};
	\draw (-14.5,3.8) node {$y$};
	\draw (-14.5,4.2) node {$y$};
\draw[decoration={brace, raise=0.5cm},decorate] (-14,2) -- ++ (14,0);
\draw (-7,4) node {$(2C+1)(n-1)$};
	\handle
	\draw (11,0) node {The \textbf{handles}};	
\end{scope}	

\begin{scope}[yshift=-24cm, xshift=6cm]
	\draw[decoration={brace, mirror, raise=0.5cm},decorate] (0,3) -- ++ (14,0);
	\draw (7,1) node {$(2C+1)(n-1)$};
	\draw[very thick] (-1,1) rectangle ++ (1,1);
	\draw (-0.5,0.8) node {$z$};
	\draw (-0.5,1.2) node {$z$};
	\draw (-0.5,1.8) node {$\alpha$};
	\draw (-0.5,2.2) node {$\alpha$};
	\draw[very thick] (14,3) rectangle ++ (1,1);
	\draw (14.5,2.8) node {$\beta$};
	\draw (14.5,3.2) node {$\beta$};
	\draw (14.5,3.8) node {$\gamma$};
	\draw (14.5,4.2) node {$\gamma$};
	\handlesym
\end{scope}	
\end{tikzpicture}	
\caption{The 27 next bars.}
\label{fig:bardeux}
\end{figure}

And now, to finish, the last bar, which depends on the exact tiles we
use.
For each tile $t$ of $\tau$ with north,west, east, south color respectively
$(a,b,c,d)$, we will first consider the following pair of
words ${(n_t, s_t) = (0^{a-1}Y0^{C-a}B0^{b-1}Y0^{C-b},
2^{c-1}Y2^{C-c}B2^{d-1}Y2^{C-d})}$
over the alphabet $\{0,Y,B\}$ of length $2C+1$. Basically, we encode
the colors in \emph{unary}.

For a given tileset $\tau$ we now consider the bar obtained by concatening all
pair of words $(n_t,s_t)$ for tiles $t \in \tau$, adding to them two
blank symbols to each extremity. Fig.~\ref{fig:content} gives an example. This
last bar will be called the \emph{content}.

\begin{figure}[hbtp]
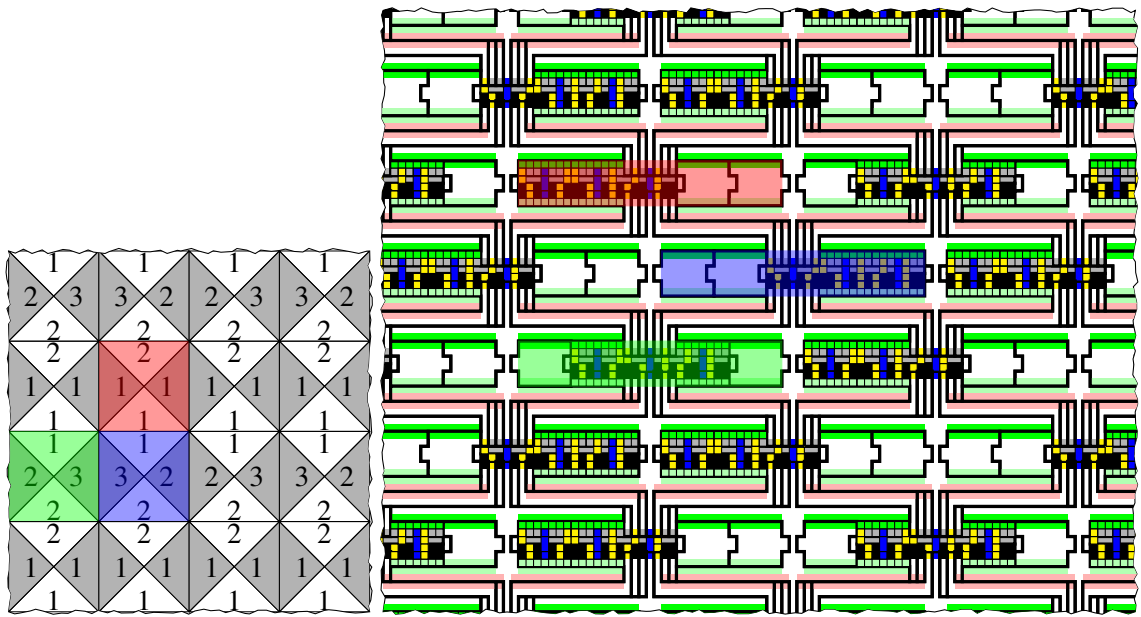

\begin{center}
% [inline block 0: 7 envs, 154976 chars -> data_tex | \begin{tikzpicture}[scale=0.3] \wang 2 3 1 2...]

\end{center}
\caption{How a tiling by the Wang tiles becomes a tiling by the bars}
\label{fig:bigscore}
\end{figure}

We will now explain why a tiling by the bars has to be of this form.
We will prove only that it must look as depicted in Fig.\ref{fig:poly}, the
rest of the proof being easy.

The different steps in the proof are depicated in Fig.\ref{fig:proof}.
We will first explain why the bar labeled $A$ must be there, then the
bar labeled $B$, etc.

First we look at all the bars, et we prove that we need to use the box.
It is quite clear:
\begin{itemize}
	\item The fillers must be attached to the box
	\item The first four polyominoes from the first set must be
	  attached to the content, and the last ones to the box
	\item The handles, due to the yellow color, must be attached either
	  to the content, or to a (green,yellow) bar that can only be
	  attached to the box.
	\item Finally, the content, due to the blue color, cannot be
	  attached to itself, so it must either be attached to the small
	  (blue, light green) bar (that in turn must be attached to the
	  box), or to the box directly.	
\end{itemize}	
In all cases, we have seen that any bar is linked, sooner or later,
to the box.

So we can now suppose that a box appear somewhere ($A$ on
Fig.\ref{fig:proof})

We know look at how to fill its upper right half. If we do not use a
handle, then the only possibility is to use the fourth tile from
Fig.\ref{fig:barun}. But we will obtain a row of at least
$(2C+1)(n-1)$ black colors, and the only bar with a black color on its
south side (the content) has at most $2C$ consecutive black colors. So
there must be a handle ($B$ on Fig.\ref{fig:proof}).

Now we look at how to fill the right part of the box. It is clear it can be
only filled with fillers or with the content (attached to small tiles).
Now we can use at most $(n-1)$ fillers: If we use one more, this
filler will be outside of the box, which is not possible as it will
collide with the handle.

So there must be somewhere a content ($C$ and $D$ on the figure). Due to the handle, the content
cannot be entirely inside boxes. So some of it is outside, and in
particular there is a blue color on the north side outside. The only
thing we can attach to this blue color is a box ($E$ on the figure). In particular, there
can be only \textbf{one} blue color appearing outside a box, which
means the only possible situation is the one depicted in Fig.
\ref{fig:proof}.

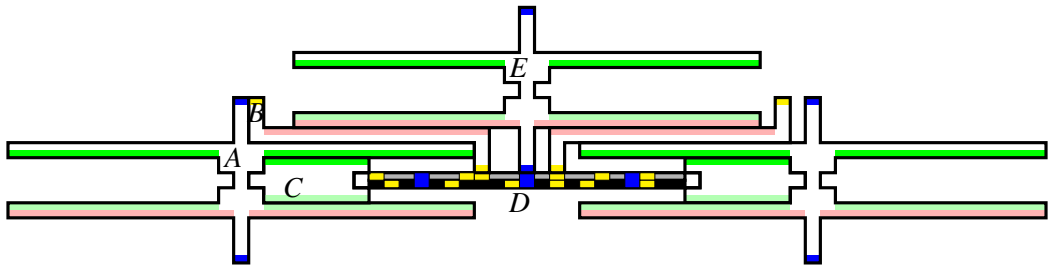
\begin{figure}[htbp]
\begin{center}
\begin{tikzpicture}[scale=0.2]
	\begin{scope}[yshift=-42cm,xshift= -19cm] \barre \end{scope}
\begin{scope}[yshift=-40cm,xshift= 4cm] \content \end{scope}
\begin{scope}[yshift=-40cm,xshift= -3cm] \filler \end{scope}
\begin{scope}[yshift=-40cm,xshift= 34cm] \fillersym \end{scope}
\begin{scope}[yshift=-39cm,xshift= 12cm] \handle \end{scope}
\begin{scope}[yshift=-36cm,xshift= 0cm] \barre \end{scope}
\begin{scope}[yshift=-39cm,xshift= 18cm] \handlesym \end{scope}
	\begin{scope}[yshift=-42cm,xshift= 19cm] \barre \end{scope}
\draw (-4,-38) node {$A$};
\draw (-2.5,-35) node {$B$};
\draw (0,-40) node {$C$};
\draw (15,-41) node {$D$};
\draw (15,-32) node {$E$};
\end{tikzpicture}
\end{center}
\caption{How to prove the bars must behave correctly.}
\label{fig:proof}
\end{figure}
It is then clear that this situation must force the bars (and
therefore the tiles they simulate) to be aligned correctly, so that the proof is done. We can now state the theorem and its corollary.

\begin{theorem}
	The tileability problem is undecidable for 44 bars. 
\end{theorem}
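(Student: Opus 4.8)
The plan is to verify that the construction $\tau\mapsto B_\tau$ developed above is a computable many‑one reduction from the Domino Problem to tileability with $44$ bars, and then to invoke Berger's theorem.

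First I would check the bookkeeping. For any Wang tileset $\tau$ with $n$ tiles and at most $C$ colours, $B_\tau$ is the disjoint union of the $16$ fixed bars of Fig.~\ref{fig:barun}, the $27$ bars of Fig.~\ref{fig:bardeux} (the box, the two fillers and the two handles, whose dimensions depend on $n$ and $C$ but whose number does not), and the single content bar of Fig.~\ref{fig:content}, for a total of $16+27+1=44$ bars; the map $\tau\mapsto B_\tau$ is evidently computable. (If $n$ or $C$ is so small that two of these bars would coincide, I would first add to $\tau$ an isolated dummy colour, which does not affect tileability, so that one always obtains $44$ distinct bars.)

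Next I would prove that $\tau$ tiles the plane if and only if $B_\tau$ tiles the plane, in both directions. For the forward direction, given a tiling $f\colon\ZZ^2\to\tau$, replace each tile $f(i,j)$ by the polyomino of Fig.~\ref{fig:poly} assembled from the box, the fillers, the handles, the small bars of Fig.~\ref{fig:barun}, and the content bar carrying the pair $(n_{f(i,j)},s_{f(i,j)})$; since the north and east colours of a tile are encoded in unary above its polyomino and the west and south colours below, the rims of horizontally and of vertically adjacent polyominoes agree precisely because $f$ is a valid Wang tiling, which yields a tiling of $\ZZ^2$ by $B_\tau$. For the converse, I would use the forcing analysis already carried out: every bar is eventually linked to a box, so a box occurs somewhere; the upper‑right half of that box forces a handle, because a block of more than $2C+1$ consecutive black cells cannot be covered, the only bar with a black south side being the content, which has at most $2C$ consecutive black cells; the remainder of the box then admits at most $n-1$ fillers and so forces a content bar; the handle prevents the content from lying entirely inside boxes, so its blue north colour forces a further box, and this pins the local picture down to the one of Fig.~\ref{fig:proof} and hence of Fig.~\ref{fig:poly}. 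Consequently any tiling by $B_\tau$ decomposes, up to translation, into a $\ZZ^2$‑grid of such polyominoes, and reading off for each one which content it carries defines a map $g\colon\ZZ^2\to\tau$; the matching of the unary rims of adjacent polyominoes is exactly the statement that $g$ respects the Wang colours, so $g$ is a tiling of the plane by $\tau$.

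Combining the two implications shows that tileability with $44$ bars is at least as hard as the Domino Problem, which is undecidable by Berger~\cite{BergerPhd}; this proves the theorem. The step I expect to be the main obstacle is the converse implication: upgrading the local forcing arguments of Fig.~\ref{fig:proof} into a genuinely global statement, that is, showing that a single forced box makes the intended pattern propagate across all of $\ZZ^2$ with no slack, and that the handle, blue and black colours leave no alternative assembly anywhere in the plane. The count, the computability, the forward direction and the extraction of $g$ are all routine once that rigidity is in place.
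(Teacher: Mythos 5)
Your proposal is correct and follows essentially the same route as the paper: count the $16+27+1=44$ bars, show the polyomino assembly gives the forward direction, and use the box/handle/content forcing argument for the converse, concluding by reduction from Berger's theorem. The step you flag as the main obstacle (upgrading the local forcing to a global rigidity statement) is indeed the part the paper itself treats only informally, so your assessment of where the real work lies is accurate.
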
	
\begin{corollary}
	The domino problem is undecidable for tilesets of parameter 43.
\end{corollary}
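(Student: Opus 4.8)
The plan is to obtain the theorem as a reduction from the Domino Problem, which is undecidable by Berger \cite{BergerPhd}, and then to read the corollary off the reduction of Section~2. Given an arbitrary tileset $\tau$ with $n$ tiles and $C$ an upper bound on its number of colors, I would take the barset $B_\tau$ described above: the $16$ fixed bars of Figure~\ref{fig:barun}, the $27$ bars (the box, the two fillers, the two handles) of Figure~\ref{fig:bardeux}, and the single \emph{content} bar built from the unary codes $(n_t,s_t)$ of the tiles of $\tau$ — exactly $44$ bars in total, all sharing a common east/west colour so that only vertical constraints matter. The statement to prove is then that $B_\tau$ tiles the plane if and only if $\tau$ does, so that undecidability of the left-hand side is inherited from undecidability of the right-hand side.

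For the corollary, the first theorem of Section~2 turns a barset of $44$ bars into a Wang tileset of parameter at most $43$ while preserving tileability, so a decision procedure for parameter~$43$ would decide tileability of $44$ bars, contradicting the theorem; undecidability for every parameter $k\ge 43$ (the last clause of Theorem~\ref{thm:maxima}) follows by padding $\tau$ with $k-43$ extra colours occurring on no tile, which raises the parameter without changing the set of tilings. Small tilesets, for which the counting bound used below fails, are decidable by brute force and can simply be excluded from the reduction, so one may assume $n\ge 2$.

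The forward implication is the routine half: from a tiling $f\colon\ZZ^2\to\tau$ one lays down, at each cell, the polyomino of Figure~\ref{fig:poly}, encoding $f(i,j)_n$ and $f(i,j)_e$ in unary along its top edge (through the content bar) and $f(i,j)_w$, $f(i,j)_s$ in unary along its bottom edge. The private colours $a,\dots,g$ and $h,\dots,\gamma$ leave no freedom in how the fixed bars, fillers and handles assemble, and one checks that horizontal matching of two adjacent polyominoes is exactly $f(i,j)_e=f(i+1,j)_w$ and vertical matching exactly $f(i,j)_n=f(i,j+1)_s$, as illustrated in Figure~\ref{fig:bigscore}.

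The hard part — and the main obstacle — is the converse: proving that \emph{every} tiling by $B_\tau$ is forced into the brick pattern of Figure~\ref{fig:poly}. I would follow the steps $A,B,C,D,E$ of Figure~\ref{fig:proof}. A case analysis on the admissible adjacencies (tracking the private colours) shows every bar is, directly or through an intermediary, glued to a copy of the \emph{box}, so a box occurs ($A$). Its upper right half can be closed only either by the fourth fixed bar of Figure~\ref{fig:barun} — which would create an unbroken row of $(2C+1)(n-1)$ black cells, impossible because the unique bar with black on its south side is the content, carrying at most $2C$ consecutive black cells, and $(2C+1)(n-1)>2C$ for $n\ge 2$ — or by a \emph{handle} ($B$). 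The right side of the box is then filled only by \emph{fillers} or by the content attached to the small bars, and at most $n-1$ fillers fit before colliding with the handle, so a content bar must appear ($C,D$); because of the handle the content cannot lie entirely inside boxes, so it protrudes and exposes a blue cell on its north side whose only possible cover is again a box ($E$). Iterating this rigidity across $\ZZ^2$ yields the configuration of Figure~\ref{fig:poly}, after which the unary codes on the content bars spell out a legal tiling of $\ZZ^2$ by $\tau$. Making every one of these adjacency arguments exhaustive — ruling out all exotic gluings of the private-coloured bars and pinning down the alignment of consecutive bricks — is where the real work lies.
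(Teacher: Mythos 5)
Your proposal follows the paper's own route exactly: the 44-bar barset $B_\tau$ (16 fixed bars, 27 box/filler/handle bars, one content bar), the forced-assembly argument via the steps $A$–$E$ of Figure~\ref{fig:proof}, and then the bars-to-tiles theorem of Section~2 to pass from 44 bars to parameter at most 43. One small aside is off but immaterial to the stated corollary: padding with colours that occur on no tile cannot \emph{raise} the parameter $n-c$ (extra unused colours either do nothing or, if charged, lower $n-c$; to increase the parameter one must add tiles), though this only concerns the ``higher parameters'' clause of Theorem~\ref{thm:maxima}, not the corollary itself.
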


It is quite clear that the parameter 43 is not optimal.
There are indeed a few way to gain bars in the construction. For
instance, the two bars with upper side $v$ and $z$ can be
identified without changing anything. However we see no need in this
article to insert any reference to The Hitchhiker's Guide to the
Galaxy, and will stay with a parameter of 43.
	%  La premiËre barre

%%% Si k tuiles et n couleurs (ici k = 3, n = 3)
%%% Longueur de content: (n*2+1)*k+2                 (ici 23)
%%% Longueur d'un cÙtÈ de la barre: (k-1)*(n*2+1)+1  (ici 14)
%%% Longueur d'un filler: n*2+1  (ici 7)
%%% Longueur de deux barres cÙte ‡ cÙtÈ: 2*((k-1)*(n*2+1)+2)
%%% on peut la remplir entiËrement de filler si c'est divisible par 2n+1
%%% donc si 2n+1 divise 4, impossible (ouf)

%% content peut se shifter de 7

\section{Tileability with 2 Wang bars is decidable}

In this last section, we will prove that we can decide whether a set
of two Wang bars tiles the plane. More precisely
\begin{theorem}
	A set of two Wang bars $B$ tiles the plane iff there is a periodic
	tiling by $B$.
\end{theorem}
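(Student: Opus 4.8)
The plan is to strip off the degenerate configurations with the graph machinery already developed, and reduce everything to one genuinely two–dimensional case which carries the whole argument.

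First I would note that if $|B|\le 1$ the claim is immediate: a lone bar $b$ can tile a row only when $b_e=b_w$, and can then tile the plane only when the periodic word $b_n^\infty$ is a shift of $b_s^\infty$, in which case taking every row with the same offset gives a periodic tiling. So assume $B=\{b,c\}$ and regard $B$ as an $\omega$-labeled graph $G$ with two edges. The proofs of Proposition~\ref{prop:etapeun} and Proposition~\ref{prop:etapedeux} go through for $\omega$-labeled graphs with only notational changes (using the barset version of Lemma~\ref{lemma:twice}, proved exactly as before by minimality and compactness), so I first delete the vertices lying on no bi-infinite path, and then, by Proposition~\ref{prop:etapeun}, delete an edge whose head cannot reach its tail; if either step removes a bar we are back to the $|B|\le 1$ situation. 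Otherwise every remaining vertex has in- and out-degree $\ge 1$ and the components are strongly connected, so with only two edges the graph has exactly one of three shapes: (i) a $2$-cycle $u\to v\to u$; (ii) two self-loops on two distinct vertices; (iii) two self-loops on a single vertex. In case (i) every vertex of the component has out-degree $1$, so Proposition~\ref{prop:etapedeux} applies with $T=B$ and gives: either the empty barset tiles (impossible), or a periodic tiling exists, or $B$ does not tile. In case (ii) every row consists entirely of copies of a single bar, so a tiling is exactly a bi-infinite path in the explicit two–vertex ``vertical compatibility'' graph (an edge from the all-$b$ row to the all-$c$ row precisely when $b_n^\infty$ is a shift of $c_s^\infty$, and so on); such a path exists iff this finite graph contains a cycle, and a cycle at once yields a periodic tiling.

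This leaves case (iii): $B=\{b,c\}$ with $b_e=b_w=c_e=c_w$, the ``figure-eight'' barset (equivalently, its split tileset is the parameter-$1$ Wang tileset whose graph is two cycles glued at a vertex). Here the horizontal colours are vacuous, so a tiling is precisely a bi-infinite stack of rows $(R_j)_{j\in\ZZ}$, each $R_j$ an arbitrary bi-infinite concatenation of copies of $b$ and $c$, with $\mathrm{North}(R_j)=\mathrm{South}(R_{j+1})$ as bi-infinite colour words. By the barset version of Lemma~\ref{lemma:twice} we may assume every bar that occurs occurs at least twice in some row; if only one bar occurs we are back to the $|B|=1$ analysis, so assume both occur. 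The strategy now is to exploit determinism: with only two available bars, the south word of a row pins that row down up to the choices made at bar boundaries, and those choices are themselves forced by the colours below as soon as $b$ and $c$ differ in length or already differ on their first (resp. last) symbol. Using this, the vertical evolution of the tiling is controlled by a bounded amount of local data — the pattern of bar boundaries in a window around a seam together with the colour block it carries — so the tiling (or at least each of its columns) is a bi-infinite trajectory of a finite-state system, and a pigeonhole over consecutive seams produces a repeated state and hence a vertically periodic tiling; as recalled earlier it then suffices to upgrade this to a horizontally periodic one, which a further compactness argument on the columns of the resulting horizontal strip provides.

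The real obstacle is the remaining ``ambiguous'' part of case (iii): when $|b|=|c|$, or more generally when the decomposition of a seam word into $\{b_n,c_n\}$-blocks and into $\{b_s,c_s\}$-blocks is not unique, the tiling is no longer determined by a single row, so the naive pigeonhole does not close and one must instead analyze directly how the bar-boundary sequences of two adjacent rows can relate — essentially a bounded-delay transducer between bi-infinite $\{|b|,|c|\}$-sequences — and show that every bi-infinite trajectory of this transducer is forced to be eventually, hence purely, periodic. I expect the bulk of the work, and the one or two genuinely new sub-lemmas, to live exactly there. Once this case is settled, combining the theorem with the reduction of Theorem~\ref{thm:wangtobar} for $k=1$ recovers Theorem~\ref{thm:minima}.
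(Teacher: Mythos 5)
Your reduction to the case where both bars share a single east/west colour is sound and matches the paper's opening move (the paper argues directly that if the east/west colours of $\alpha$ and $\beta$ differ, horizontal periodicity is forced; your graph-theoretic cases (i) and (ii) cover the same degenerate situations, and your use of Proposition~\ref{prop:etapedeux} for the $2$-cycle is fine). But the proof stops exactly where the theorem actually lives. In case (iii) you assert that ``the vertical evolution of the tiling is controlled by a bounded amount of local data'' so that a pigeonhole on seams yields periodicity, yet you never define the finite state space, never prove the transition is well defined, and then explicitly concede that when the decomposition of a seam word into $\{b_n,c_n\}$-blocks versus $\{b_s,c_s\}$-blocks is not unique ``one or two genuinely new sub-lemmas'' are needed and that ``the bulk of the work'' lives there. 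That ambiguous situation is not a corner case: it is the generic one (the block boundaries of adjacent rows are almost never aligned, and nothing forces the two readings of a seam to agree), so what you have is a plan, not a proof.

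Concretely, the missing content is the combinatorial analysis that the paper supplies in Section~5: Lemma~\ref{lemma:gamma} (an $\alpha$ straddling two $\alpha$'s forces $\gamma=(vu,uv)$ and hence a single periodic row), the lemma showing that if no $\alpha$ ever partially overlaps an $\alpha$ above it then minimality of the run-length $p$ of consecutive $\alpha$'s forces a repeatable $(k\ \beta)(p\ \alpha)$ block, and the final case analysis on $p=\min(p_1,p_2)$ and $q=\min(q_1,q_2)$ that extracts a periodic block in the remaining situation. None of your proposed machinery (determinism of the south word, a bounded-delay transducer between boundary sequences) is established, and it is not obvious it can be: a single row does \emph{not} determine the row above it even when $|b|\neq|c|$, because the seam word can admit several factorisations into north-words. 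Until you either carry out your transducer argument in full or replace it by an overlap analysis of the paper's kind, the theorem is not proved.
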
	

The rest of the section is devoted to the proof.

We start from two Wang bars $\alpha$ and $\beta$, et we will assume
that there exists a tiling with $\alpha$ and $\beta$, but no periodic
tiling, and reach a contradiction.

First, we see that the colors on the east and west side of the bars
have to be the same: Otherwise, either we cannot put $\alpha$ and
$\beta$ on the same line (so that every tiling must be periodic
horizontally of period $lcm(|\alpha|, |\beta|)$, hence there is a periodic tiling), or we must always put
$\alpha$ after $\beta$ and $\beta$ after $\alpha$ on the same line 
(in which case every tiling is periodic horizontally of period $|\alpha|+|\beta|$, hence there is a periodic tiling).
So, as in the previous section, the only constraints to put the two
bars together are vertical constraints, and we can see the bars as two
words $\alpha = (\alpha_N, \alpha_S)$, $\beta = (\beta_N, \beta_S)$.

W.l.o.g., we will assume $|\alpha| \geq |\beta|$.
We can also suppose that we cannot tile using only $\alpha$ (resp.
$\beta$), as this implies that there exists a periodic tiling.

\begin{lemma}
	\label{lemma:gamma}
	If the following pattern appears in some tiling, then there
	exists a periodic tiling
\begin{center}	
\begin{tikzpicture}[scale=0.3]
	\draw (0,0) rectangle ++(5,1);
	\draw (2.5,0.5) node {$\gamma$};
	\draw (5,1.5) node {$\gamma$};
	\draw (7.5,0.5) node {$\gamma$};
	\draw (5,0) rectangle ++(5,1);
	\draw (3,1) rectangle ++(5,1);
\end{tikzpicture}
\end{center}
where $\gamma$ is one of the two bars $\alpha$ or $\beta$.
\end{lemma}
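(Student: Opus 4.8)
The plan is to squeeze a single combinatorial identity out of the hypothesis and then build a doubly periodic tiling using only the bar $\gamma$. Write $\ell=|\gamma|$ and regard $\gamma$ as the pair of length-$\ell$ words $(\gamma_N,\gamma_S)$. In the displayed pattern the upper copy of $\gamma$ is placed at some horizontal offset $s$ above the two adjacent lower copies; since it genuinely straddles the shared edge of those two copies, $0<s<\ell$. Applying the vertical matching condition letter by letter across the whole width of this upper bar, its south word $\gamma_S$ must equal the concatenation of the last $\ell-s$ letters of $\gamma_N$ with the first $s$ letters of $\gamma_N$; equivalently, with all indices read modulo $\ell$, $\gamma_S[m]=\gamma_N[m+s]$ for every $m$, i.e. $\gamma_S$ is the cyclic shift of $\gamma_N$ by $s$.

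Next I would use this identity to define a tiling of the whole plane in which every row is tiled solely by $\gamma$-bars, with the bars of row $j$ having their left boundaries at the positions $js+\ell\mathbb{Z}$. The horizontal compatibility conditions are automatic, since by the reduction already made at the start of the section all the bars carry the same colour on their east and west sides. The position inside the bar at a cell $(x,j)$ is then $(x-js)\bmod\ell$, so the north word of row $j$ at $x$ is $\gamma_N[(x-js)\bmod\ell]$ and the south word of row $j+1$ at $x$ is $\gamma_S[(x-js-s)\bmod\ell]$; by the identity $\gamma_S[m]=\gamma_N[m+s]$ these agree, which is exactly the vertical compatibility. Finally this configuration is invariant under the horizontal translation $(x,j)\mapsto(x+\ell,j)$ and under the diagonal map $(x,j)\mapsto(x+s,j+1)$; composing $\ell$ diagonal steps with $s$ horizontal steps in the reverse direction shows it is also invariant under $(x,j)\mapsto(x,j+\ell)$, so it is periodic, and we are done.

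There is no genuinely hard step here; the points that need care are the bookkeeping of which edge meets which — the north word of the lower row equals the south word of the upper row — and the direction of the cyclic shift, together with noting that the pattern as drawn forces $0<s<\ell$ so that the shift is nontrivial. (In the degenerate situation $s\equiv 0$, where the upper bar would be aligned with a lower one, one gets $\gamma_S=\gamma_N$ and the same all-$\gamma$ construction without any shift is still periodic; but this case is excluded by the shape of the pattern and is not needed.)
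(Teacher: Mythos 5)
Your proof is correct and follows essentially the same route as the paper's: read off from the pattern that $\gamma_S$ is the cyclic shift of $\gamma_N$ by the offset, then tile the plane with rows consisting only of $\gamma$, each row shifted by that offset, and observe the result is doubly periodic. You simply carry out the index bookkeeping and the periodicity check more explicitly than the paper does.
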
	
\begin{proof}
It is quite clear from the picture.
Let $n$ be the length of the bars, and $p$ the horizontal distance
between the lower left bar and the upper bar.

Looking at the upper bar, we write $\gamma_S = uv$ where $u$ is of size $n-p$
and $v$ of size $p$. Looking at the two other bar, we see that the
last $p$ symbols of $\gamma_N$ are $u$, and the first $n-p$ symbols of
$\gamma_N$ are $v$, so that $\gamma = (vu, uv)$. So a row containing only $\gamma$ will have the same upper and lower
side, up to a shift of $p$, and hence we can obtain a periodic tiling by
repeating this row with a suitable shift every time.
\end{proof}	

\begin{lemma}
	If there is a tiling where the following pattern (and its
	horizontal symmetry) does \textbf{not} appear,
	then there exists a periodic tiling.
\begin{center}	
\begin{tikzpicture}[scale=0.3]
	\draw (0,0) rectangle ++(5,1);
	\draw (2.5,0.5) node {$\alpha$};
	\draw (5,1.5) node {$\alpha$};
	\draw (3,1) rectangle ++(5,1);
\end{tikzpicture}
\end{center}		
\end{lemma}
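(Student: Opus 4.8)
The plan is to argue by contradiction: assume a tiling $T$ avoids the displayed pattern and its horizontal mirror, yet no periodic tiling of the plane exists, and derive a contradiction.

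First I would record some reductions, all consequences of a single principle: if a single bar $\gamma\in\{\alpha,\beta\}$ tiles the plane, a half-plane, or arbitrarily large squares by itself, then (reading the north and south sides of the relevant $\gamma$-only region as periodic bi-infinite words that must match between consecutive rows) $\gamma_N$ and $\gamma_S$ are cyclic conjugates, hence $\gamma$ tiles periodically --- contradiction. From this: $\alpha_N\neq\alpha_S$ and $\beta_N\neq\beta_S$ (in fact neither pair is a pair of cyclic conjugates, since they appear earlier in the argument); each of $\alpha,\beta$ occurs in arbitrarily high and arbitrarily low rows of $T$ (otherwise a horizontal half-plane of $T$ is tiled by the other bar alone, and a single-bar half-plane tiling forces conjugacy and then extends to a periodic tiling); and $T$ contains no arbitrarily large $\alpha$-free square (otherwise a compactness limit of such squares is a $\beta$-only tiling of the plane). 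Moreover, by Lemma~\ref{lemma:gamma} I may assume no ``staircase'' of $\alpha$'s and no staircase of $\beta$'s occurs in $T$, since such a staircase already yields a periodic tiling.

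Next I extract the rigidity forced by the hypothesis. If an $\alpha$-bar in row $j$ and an $\alpha$-bar in row $j{+}1$ have column ranges that overlap, those ranges must coincide --- a nonzero horizontal offset is exactly the displayed pattern or its mirror. But two $\alpha$-bars occupying the same column range in consecutive rows force, cell by cell, $\alpha_N=\alpha_S$, which was excluded. Hence the $\alpha$-bars of any two vertically adjacent rows occupy pairwise disjoint sets of columns; the $\alpha$-bars of $T$ are ``vertically sparse,'' while they remain ``horizontally syndetic'' by the no-large-$\alpha$-free-square fact.

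The hard part will be the endgame: converting vertical sparsity plus syndeticity of the $\alpha$-bars into an actual periodic tiling (note that sparse-plus-syndetic does not by itself give a contradiction --- e.g. an $\alpha$/$\beta$ checkerboard of slabs is periodic --- so one really must locate the period). The intended mechanism: vertical sparsity means that immediately above and below each $\alpha$-bar one sees a pure-$\beta$ strip, and using the compactness facts of the first step (together with Lemma~\ref{lemma:gamma} applied to $\beta$) one shows such pure-$\beta$ strips are forced to be short and of only finitely many ``shapes''; combined with the bounded number of ways an $\alpha$-bar can sit among them, the rows of $T$ read from bottom to top become a bi-infinite word over a finite alphabet of ``row type together with relative horizontal offset.'' A pigeonhole argument then gives two heights $j<j'$ carrying the same symbol with matching cumulative offset, and cutting $T$ along those two rows and repeating the slab between them produces a tiling periodic in the vertical direction (period $j'-j$) and --- after undoing the offset --- horizontally as well, contradicting the assumption. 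The genuinely delicate point, and where I expect most of the work to go, is that the hypothesis constrains only the $\alpha$-bars and says nothing about how the $\beta$-bars stagger, so the finiteness of that alphabet and the bound on the pure-$\beta$ strips must be wrung out of the compactness arguments and of Lemma~\ref{lemma:gamma}.
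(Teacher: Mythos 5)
Your opening reductions and your rigidity observation are correct and consistent with what the paper uses: two $\alpha$-bars in consecutive rows with overlapping column ranges either realize the forbidden pattern (proper shift) or coincide, and coincidence forces $\alpha_N=\alpha_S$, which is excluded since it would give a periodic tiling by $\alpha$ alone. Up to that point you are on solid ground.

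The endgame, however, is a genuine gap, and you have essentially flagged it yourself. The pigeonhole argument needs a \emph{finite} alphabet of row descriptions, but a ``row type'' of a tiling by two bars is a bi-infinite word over $\{\alpha,\beta\}$ together with a phase; nothing you have established makes this set finite. Likewise, the claim that the pure-$\beta$ strips above and below an $\alpha$-bar are ``short and of finitely many shapes'' has no support: the hypothesis constrains only pairs of $\alpha$-bars, and neither compactness nor Lemma~\ref{lemma:gamma} bounds the length of a run of $\beta$'s. Finally, even granting a finite alphabet, two rows carrying the same symbol do not yield a repeatable slab unless the entire bi-infinite rows agree (up to the shift), which is again the whole difficulty. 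The paper closes the argument by a quite different, local device: let $p\geq 1$ be the \emph{minimal} length of a maximal run of consecutive $\alpha$-bars in the tiling (finite by a Lemma~\ref{lemma:twice}-style argument), and examine a run realizing this minimum. The hypothesis forces the cells above that run to be $\beta$'s; the bar above the $\beta$ immediately to its right is either a $\beta$ (producing the staircase of Lemma~\ref{lemma:gamma}, done) or an $\alpha$, in which case minimality of $p$ gives a run of at least $p$ $\alpha$'s there, with $\beta$'s beneath them. One then reads off a finite block of $k$ $\beta$'s followed by $p$ $\alpha$'s, staggered between the two rows, whose boundary words match (witnessed by the deleted leftmost $\beta$), and this block tiles the plane periodically. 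It is this extremal choice of $p$ --- absent from your proposal --- that supplies the finiteness your pigeonhole argument is missing.
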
	
\begin{proof}
We consider this tiling, and $p \geq 1$ the minimal number of consecutive
bars of type $\alpha$ in such a tiling. (it is clear that $p$ cannot be
infinite. Otherwise this imply that a bar of type $\alpha$ can be followed
(or preceded) at most once by a bar of type $\beta$ on each row, hence by the
same argument as Lemma \ref{lemma:twice}, there exists a tiling where
this doesn't appear, i.e. where every row consists either entirely of
bars of type $\alpha$ or entirely of bars of type $\beta$, which would
imply the existence of a periodic tiling).

Now we look at a position where this minimum number appears.
We take $p = 2$ for the picture.
\begin{center}	
\begin{tikzpicture}[scale=0.3]
	\draw (0,0) rectangle ++(5,1);
	\draw (2.5,0.5) node {$\alpha$};
	\draw (-5,0) rectangle ++(5,1);
	\draw (-2.5,0.5) node {$\alpha$};
\end{tikzpicture}
\end{center}	
By definition of $p$, these tiles are surrounded horizontally by bars
of type $\beta$.
\begin{center}	
\begin{tikzpicture}[scale=0.3]
	\draw (0,0) rectangle ++(5,1);
	\draw (2.5,0.5) node {$\alpha$};
	\draw (-5,0) rectangle ++(5,1);
	\draw (-2.5,0.5) node {$\alpha$};
	\draw (-8,0) rectangle ++(3,1);
	\draw (-6.5,0.5) node {$\beta$};
	\draw (5,0) rectangle ++(3,1);
	\draw (6.5,0.5) node {$\beta$};
\end{tikzpicture}
\end{center}	
Now by the hypothesis of the lemma, all bars above the bars of type $\alpha$
must be $\beta$ tiles:
\begin{center}	
\begin{tikzpicture}[scale=0.3]
	\draw (0,0) rectangle ++(5,1);
	\draw (2.5,0.5) node {$\alpha$};
	\draw (-5,0) rectangle ++(5,1);
	\draw (-2.5,0.5) node {$\alpha$};
	\draw (-8,0) rectangle ++(3,1);
	\draw (-6.5,0.5) node {$\beta$};
	\draw (5,0) rectangle ++(3,1);
	\draw (6.5,0.5) node {$\beta$};
	\draw (-5.5,1) rectangle ++(3,1);
	\draw (-4,1.5) node {$\beta$};	
	\draw (-2.5,1) rectangle ++(3,1);
	\draw (-1,1.5) node {$\beta$};	
	\draw (.5,1) rectangle ++(3,1);
	\draw (2,1.5) node {$\beta$};		
	\draw (3.5,1) rectangle ++(3,1);
	\draw (5,1.5) node {$\beta$};			
\end{tikzpicture}
\end{center}	
Now we look at the bar we have to put above the lower rightmost
bar of type $\beta$. If it is of type  $\beta$, we obtain the pattern from the
previous lemma and we are done. Otherwise it is of type  $\alpha$, and by
the minimality of $p$, there are at least $p$ such bars:
\begin{center}	
\begin{tikzpicture}[scale=0.3]
	\draw (0,0) rectangle ++(5,1);
	\draw (2.5,0.5) node {$\alpha$};
	\draw (-5,0) rectangle ++(5,1);
	\draw (-2.5,0.5) node {$\alpha$};
	\draw (-8,0) rectangle ++(3,1);
	\draw (-6.5,0.5) node {$\beta$};
	\draw (5,0) rectangle ++(3,1);
	\draw (6.5,0.5) node {$\beta$};
	\draw (-5.5,1) rectangle ++(3,1);
	\draw (-4,1.5) node {$\beta$};	
	\draw (-2.5,1) rectangle ++(3,1);
	\draw (-1,1.5) node {$\beta$};	
	\draw (.5,1) rectangle ++(3,1);
	\draw (2,1.5) node {$\beta$};		
	\draw (3.5,1) rectangle ++(3,1);
	\draw (5,1.5) node {$\beta$};			
	\draw (6.5,1) rectangle ++(5,1);
	\draw (9,1.5) node {$\alpha$};			
	\draw (11.5,1) rectangle ++(5,1);
	\draw (14,1.5) node {$\alpha$};			
\end{tikzpicture}
\end{center}	
To finish, we have again by the hypothesis of the lemma, that the bars
below it are of type $\beta$:
\begin{center}	
\begin{tikzpicture}[scale=0.3]
	\draw (0,0) rectangle ++(5,1);
	\draw (2.5,0.5) node {$\alpha$};
	\draw (-5,0) rectangle ++(5,1);
	\draw (-2.5,0.5) node {$\alpha$};
	\draw (-8,0) rectangle ++(3,1);
	\draw (-6.5,0.5) node {$\beta$};
	\draw (5,0) rectangle ++(3,1);
	\draw (6.5,0.5) node {$\beta$};
	\draw (-5.5,1) rectangle ++(3,1);
	\draw (-4,1.5) node {$\beta$};	
	\draw (-2.5,1) rectangle ++(3,1);
	\draw (-1,1.5) node {$\beta$};	
	\draw (.5,1) rectangle ++(3,1);
	\draw (2,1.5) node {$\beta$};		
	\draw (3.5,1) rectangle ++(3,1);
	\draw (5,1.5) node {$\beta$};			
	\draw (6.5,1) rectangle ++(5,1);
	\draw (9,1.5) node {$\alpha$};			
	\draw (11.5,1) rectangle ++(5,1);
	\draw (14,1.5) node {$\alpha$};
	\draw (8,0) rectangle ++(3,1);
	\draw (9.5,0.5) node {$\beta$};				
	\draw (11,0) rectangle ++(3,1);
	\draw (12.5,0.5) node {$\beta$};				
	\draw (14,0) rectangle ++(3,1);
	\draw (15.5,0.5) node {$\beta$};				
\end{tikzpicture}
\end{center}

Now we look at the picture without the bottom leftmost bar of type
$\beta$. We
obtain a pattern of $k$ bars of type $\beta$ followed by
$p$ bars of type $\alpha$ on
the first row, and of $p$ bars of type $\alpha$ followed by $k$ bars
of type $\beta$
on the second row (it is easy to see it must be the exact same $k$).
Now this pattern can be repeated periodically: The last few symbols of the
north part of the bar $\beta$  are indeed the same as the first symbols of
the south part of the bar $\beta$, as is witnessed by the bottom
leftmost bar of type $\beta$ we just deleted. Hence this pattern of $k$
bars of type $\beta$ followed by $p$ bars of type $\alpha$ can be repeated
periodically horizontally and vertically up to shift, and we obtain a
periodic tiling.
\end{proof}	

So we are done to our last case: The following pattern appears
somewhere:
\begin{center}	
\begin{tikzpicture}[scale=0.3]
	\draw (0,0) rectangle ++(5,1);
	\draw (2.5,0.5) node {$\alpha$};
	\draw (5,1.5) node {$\alpha$};
	\draw (3,1) rectangle ++(5,1);
\end{tikzpicture}
\end{center}		
We look at some occurence of this pattern.
The lower left bar of type $\alpha$ is perhaps preceded by some other bars of
type $\alpha $ so that there are $p_1$ of them in total, and the same goes for
the upper right bar, which might be constituted of $p_2$¬†bars of type  $\alpha$.
Upto a rotation of the bars, we suppose that $p=\min(p_1,p_2) = p_1$. We will now not
care at all about the remaining $p_2 - p$ bars of type $\alpha$ that
might appear on the upper right.
The figure is now as follows (we take $p = 2$):
\begin{center}	
\begin{tikzpicture}[scale=0.3]
	\draw (0,0) rectangle ++(5,1);
	\draw (-5,0) rectangle ++(5,1);
	\draw (-8,0) rectangle ++(3,1);
	\draw (2.5,0.5) node {$\alpha$};
	\draw (5,1.5) node {$\alpha$};
	\draw (-2.5,0.5) node {$\alpha$};
	\draw (-6.5,0.5) node {$\beta$};
	\draw (10,1.5) node {$\alpha$};
	\draw (3,1) rectangle ++(5,1);
	\draw (8,1) rectangle ++(5,1);
\end{tikzpicture}
\end{center}		
(The leftmost bar of type $\alpha$ must of course be preceded by a bar
of type $\beta$ by the definition of $p$)

Now to complete each bar of type $\alpha$, we need to put at least one
bar of type $\beta$, otherwise we would obtain the figure from Lemma \ref{lemma:gamma}.
We thus put $q_1$ bars of type $\beta$ at the top left of the picture, and $q_2$ bars
at the bottom right. 
There are two cases:
\begin{itemize}
	\item There are too many bars of type $\beta$ on both side, so that
	  they completely cover the $p$ bars of type $\alpha$.
	  Let $q < \min(q_1, q_2)$  be such that the first $q-1$ bars of type
	  $\beta$ will not cover all bars of type $\alpha$ but $q$ such
	  bars will. We now have the following picture (with $q = 3$)
\begin{center}	
\begin{tikzpicture}[scale=0.3]
	\draw (0,0) rectangle ++(5,1);
	\draw (-5,0) rectangle ++(5,1);
	\draw (-8,0) rectangle ++(3,1);
	\draw (-6.5,0.5) node {$\beta$};
	\draw (2,0.5) node {$\alpha$};	
	\draw (5,1.5) node {$\alpha$};
	\draw (-2.5,0.5) node {$\alpha$};
	\draw (10,1.5) node {$\alpha$};
	\draw (2.5,1) rectangle ++(5,1);
	\draw (7.5,1) rectangle ++(5,1);
	\draw (-.5,1) rectangle ++(3,1);	
	\draw (-3.5,1) rectangle ++(3,1);	
	\draw (-6.5,1) rectangle ++(3,1);	
	\draw (5,0) rectangle ++(3,1);	
	\draw (8,0) rectangle ++(3,1);	
	\draw (11,0) rectangle ++(3,1);	
	\draw (1,1.5) node {$\beta$};
	\draw (-2,1.5) node {$\beta$};
	\draw (-5,1.5) node {$\beta$};
	\draw (6.5,0.5) node {$\beta$};
	\draw (9.5,0.5) node {$\beta$};
	\draw (12.5,0.5) node {$\beta$};
\end{tikzpicture}
\end{center}		
But by the same reasoning as the previous lemma, this implies that we
can tile using $q$ bars of type $\beta$ and $p$ bars of type $\alpha$
periodically
\item There is at least one side on which there are not enough bars of
  type $\beta$ to cover the bars of type $\alpha$. We take again $q =
  \min(q_1,q_2)$ and suppose that $q = q_1$ (the other case is
  similar, as our argument will not use the bottom left bar of type
  $\beta$), and we completely forget about the $q_2 - q_1$ remaining bars of type
  $\beta$.
  We take $p'\leq p$ so that the $p'-1$th bar is
  entirely covered by the bars of type $\beta$
  but not the $p'$th, to obtain the following picture ($p' = 1$ on the picture)
\begin{center}
\begin{tikzpicture}[scale=0.3]
	\draw (0,0) rectangle ++(5,1);
	\draw (2.5,0.5) node {$\alpha$};
	\draw (3.5,1.5) node {$\alpha$};
	\draw (2.5,1) rectangle ++(5,1);
	\draw (.5,1) rectangle ++ (1,1);
	\draw (1,1.5) node {$\beta$};
	\draw (1.5,1) rectangle ++ (1,1);
	\draw (2,1.5) node {$\beta$};	
	\draw (5,0) rectangle ++ (1,1);
	\draw (5.5,0.5) node {$\beta$};
	\draw (6,0) rectangle ++ (1,1);
	\draw (6.5,0.5) node {$\beta$};	
\end{tikzpicture}
\end{center}		

Now by definition of $q$, there must be a $\alpha$-bar on the top left:
\begin{center}	
\begin{tikzpicture}[scale=0.3]
	\draw (0,0) rectangle ++(5,1);
	\draw (2.5,0.5) node {$\alpha$};
	\draw (3.5,1.5) node {$\alpha$};
	\draw (2.5,1) rectangle ++(5,1);
	\draw (.5,1) rectangle ++ (1,1);
	\draw (1,1.5) node {$\beta$};
	\draw (-4.5,1) rectangle ++ (5,1);
	\draw (-2,1.5) node {$\alpha$};
	\draw (1.5,1) rectangle ++ (1,1);
	\draw (2,1.5) node {$\beta$};	
	\draw (5,0) rectangle ++ (1,1);
	\draw (5.5,0.5) node {$\beta$};
	\draw (6,0) rectangle ++ (1,1);
	\draw (6.5,0.5) node {$\beta$};	
\end{tikzpicture}
\end{center}		
By the same argument as the previous lemma, this implies that we can
tile using $q$ bars $\beta$ followed by $p'$ bars $\alpha$
periodically in each row.
\end{itemize}
The proof is now finished, and in all cases, we proved that there
exists a periodic tiling.

\section{Conclusion} There is no algorithm to decide whether a set of
$44 $ Wang Bars tiles the plane. On the other hand, a set of $2$ Wang
Bars tiles the plane if and only if it tiles it periodically. It would
be interesting to reduce the gaps between these two values. It is
clear that $44$ is not optimal, and that we can reduce it. While we did
not work out the details, it is safe to assume we can obtain something
around 35 using the same idea. However, it is much more interesting to
try to prove that the problem is decidable for $3$ or $4$ Wang bars.
While it is our opinion that the result holds, this situation became
dramatically more complex with even $3$ Wang Bars, and we do not know
of any approach to solve this problem.

\paragraph{Acknowledgements} The first author thanks Daniel Gon\c{c}alves
and Pascal Vanier for some interesting discussions that lead to the
proof of the case of 2 Wang Bars.

\end{document}